\newcommand{\gf}{{\mathrm{GF}}}
\newtheorem{theorem}{Theorem}
\newtheorem{lemma}[theorem]{Lemma}
\newtheorem{conj}{Conjecture}
\newtheorem{definition}{Definition}
\begin{document}

\title{Power Functions over Finite Fields with Low $c$-Differential Uniformity }
\date{\today}

\author{Haode Yan, Sihem Mesnager, and Zhengchun Zhou
\thanks{H. Yan and Zhengchun Zhou are with the School of Mathematics, Southwest Jiaotong University, Chengdu, 610031, China  (e-mail: hdyan@swjtu.edu.cn, zzc@home.swjtu.edu.cn)}
\thanks{S. Mesnager is with the Department of Mathematics, University of
Paris VIII, 93526 Saint-Denis,  with University
Sorbonne Paris Cit\'e, LAGA, UMR 7539, CNRS,
93430 Villetaneuse,  and
also with the T\'el\'ecom Paris, 91120 Palaiseau, France (e-mail: smesnager@univ-paris8.fr).}
}
\maketitle

\begin{abstract}
Very recently, a new concept called multiplicative differential (and the corresponding $c$-differential uniformity) was introduced by Ellingsen \textit{et al} in
[C-differentials, multiplicative uniformity and (almost) perfect c-nonlinearity, IEEE Trans. Inform. Theory, 2020]
 which is motivated from practical differential cryptanalysis. Unlike classical perfect nonlinear functions, there are
perfect $c$-nonlinear functions even for characteristic two.  The objective of this paper is to study power function $F(x)=x^d$ over finite fields with low $c$-differential uniformity. Some power functions are shown to be perfect $c$-nonlinear or almost perfect $c$-nonlinear. Notably, we completely determine the $c$-differential uniformity of almost perfect nonlinear functions with the well-known Gold exponent. We also give an affirmative solution to a recent conjecture proposed by Bartoli and Timpanella  in 2019 related to an exceptional quasi-planar power function.

%
\end{abstract}
{\bf keywords}:
 Differential uniformity, $c$-differential uniformity, perfect nonlinear function (PN), perfect $c$-nonlinear (P$c$N) function, almost perfect nonlinear function (APN), almost perfect $c$-nonlinear (AP$c$N) function.

\section{Introduction}

Differential cryptanalysis (\cite{BS,BS93}) is one of the most fundamental cryptanalytic approaches targeting symmetric-key primitives. Such a cryptanalysis has attracted a lot of attention since it was proposed to be the first statistical attack for breaking the iterated block ciphers \cite{BS}. The security of cryptographic functions regarding differential attacks was widely studied in the last 30 years. This security is quantified by the so-called \emph{differential uniformity} of the substitution box (S-box) used in the cipher \cite{NK}.  A  very nice survey on the differential uniformity of vectorial Boolean functions can be found in the chapter of Carlet \cite{Cbook} and an interesting article on this topic is \cite{Carlet2018}. In \cite{BCJW}, a new type of differential was proposed. The author's utilized modular multiplication as a primitive operation, which extends the type of differential cryptanalysis. It is necessary to start the theoretical analysis of an (output) multiplicative differential.
Very recently, a new concept called \emph{multiplicative differential }(and the corresponding $c$-differential uniformity) was coined  by Ellingsen \textit{et al} (\cite{EFRST}) which is motivated from practical differential cryptanalysis.

\begin{definition}Let $\gf(p^n)$ denote the finite field with $p^n$ elements, where $p$ is a prime number and $n$ is a positive integer. For a function $F$ from $\gf(p^n)$ to itself, $a,c \in \gf(p^n)$, the (multiplicative) $c$ derivative of $F$ with respect to $a$ is define as
\[_cD_aF(x)=F(x+a)-cF(x), ~\mathrm{for}~\mathrm{all}~x.\]
For $b\in\gf(p^n)$, let $_c\Delta_F(a,b)=\#\{x\in\gf(p^n): F(x+a)-cF(x)=b\}$. We call $_c\Delta_F=\mathrm{max}\{_c\Delta_F(a,b):a,b\in\gf(p^n), \mathrm{and}~ a\neq 0 ~\mathrm{if}~ c=1\}$ the $c$-differential uniformity of F. If $_c\Delta_F=\delta$, then we say $F$ is differentially $(c,\delta)$-uniform.
\end{definition}
If the $c$-differential uniformity of $F$ equals $1$, then $F$ is called a perfect $c$-nonlinear (P$c$N) function. P$c$N functions over odd characteristic finite fields are also called $c$-planar functions. If the $c$-differential uniformity of $F$ is $2$, then $F$ is called an almost perfect $c$-nonlinear (AP$c$N) function. It is easy to see that, for $c=1$ and $a\neq0$, the $c$-differential uniformity becomes the usual differential uniformity, the P$c$N and AP$c$N functions become perfect nonlinear (PN) function and almost perfect nonlinear function (APN) respectively, which play an important role in both theory and application. For even characteristic, APN functions have the lowest differential uniformity. Known APN functions with even characteristic presented in \cite{BD,D1,D2,D3,G,JW,K,N}. For the known results on PN and APN functions over odd characteristic finite fields, the readers are referred to \cite{CM,DMMPW,DO,DY,HRS,HS,L,ZW10,ZW11}.

Because of the strong resistance to differential attacks and the low implementation cost in a hardware environment, power functions $F(x)=x^d$ (i.e., monomials) with low differential uniformity serve as good candidates for the design of S-boxes. Power functions with low differential uniformity may introduce some unsuitable weaknesses within a cipher \cite{BCC,JaKn,CoPi,CaVi}, they also provide better resistance towards differential cryptanalysis. For instance, a differentially $4$-uniform power function, which is extended affine EA-equivalent to the inverse function $x \mapsto x^{2^n-2}$ over $\gf(2^n)$ with even $n$, is employed in the AES (advanced encryption standard).  Two functions $F$ and $F^{\prime}$ from $\gf(p^n)$ to $\gf(p^r)$ are called EA-equivalent if there exist affine automorphisms $L$ from  $\gf(p^n)$ to $\gf(p^n)$ and $L^{\prime}$ from $\gf(p^r)$ to $\gf(p^r)$ and an affine function $L''$ from
$\gf(p^n)$ to $\gf(p^r)$ such that $F'=L^{\prime} \circ F \circ L + L''$.

A nature question one would ask is whether the power functions have good $c$-differential properties. In \cite{EFRST}, the authors studied the $c$-differential uniformity of the well-known inverse function $F(x)=x^{p^n-2}$ over $\gf(p^n)$, for both even and odd prime $p$. It was shown  that $F$ is P$c$N when $c=0$, $F$ is AP$c$N with some conditions of $c$ and differentially $(c,3)$-uniform otherwise. This result illustrates that P$c$N functions can exist for $p=2$. For P$c$N functions $x^{\frac{3^k+1}{2}}$ over $\gf(3^n)$ and $c=-1$, a sufficient and necessary condition is presented in \cite{EFRST}. In \cite{BT}, it is shown that for odd $p$, $n$ and $c=-1$, $x^{\frac{p^2+1}{2}}$ over $\gf(p^n)$ and $x^{p^2-p+1}$ over $\gf(p^3)$ are P$c$N functions. We summarized the known results in Table \ref{table-1} as well as the results obtained in this paper.\\

Planar functions have deep applications in different areas of mathematics. Recently,  Bartoli and  Timpanella  \cite{BT} provided a generalization of planar functions and obtained construction and classification results concerning these new objects. In their paper \cite{BT}, the authors have proposed to solve the following conjecture.

\begin{conj}[\cite{BT}, Conjecture 4.7]\label{conj}Let $p$ be an odd prime and $n$ be an odd integer. For $c=-1$, the power function $x^{\frac{p^n+1}{p+1}}$ is P$c$N over $\gf(p^n)$.
\end{conj}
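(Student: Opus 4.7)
The plan rests on the identity $(x^d)^{p+1} = x^{d(p+1)} = x^{p^n+1} = x^2$ for every $x \in \gf(p^n)$, which is valid because $d(p+1) = p^n + 1$. First I record two arithmetic inputs, both obtained from the alternating-sum expression $d = p^{n-1} - p^{n-2} + \cdots - p + 1$. Since $n$ is odd, $d$ is a sum of $n$ odd integers and is therefore odd; combined with $d(p+1) \equiv 2 \pmod{p^n-1}$ this forces $\gcd(d, p^n - 1) = 1$, so $x \mapsto x^d$ permutes $\gf(p^n)$ and the case $a = 0$ (where the equation becomes $2x^d = b$) is immediate. Reducing the same alternating sum modulo $p-1$ (using $p \equiv 1$ there) gives $d \equiv 1 \pmod{p-1}$; hence $\zeta^d = \zeta$ for every $\zeta \in \gf(p)$, a fact I will need below.

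For $a \ne 0$ set $y = x^d$ and rewrite the original equation as $(x+a)^d = b - y$. Raising to the $(p+1)$-th power and applying the key identity to both $x$ and $x+a$ yields $(x+a)^2 = b^{p+1} - b y^p - b^p y + x^2$, which rearranges to the linearized relation
\[
2ax + b y^p + b^p y = b^{p+1} - a^2, \qquad y = x^d. \tag{$\star$}
\]
Assume for contradiction that $x_1 \ne x_2$ both solve the original equation, and set $y_i = x_i^d$, $u = y_1 - y_2 \ne 0$, $v = x_1 - x_2 \ne 0$. Subtracting $(\star)$ at $x_1$ and $x_2$ gives the first relation $bu^p + b^p u + 2av = 0$, which I call $(\heartsuit)$. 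A crucial observation is that the $\mathbb{F}_p$-linear map $\Lambda(u) := bu^p + b^p u$ on $\gf(p^n)$ has trivial kernel: $\Lambda(u) = 0$ with $u \ne 0$ would force $(u/b)^{p-1} = -1$, and $-1$ admits a $(p-1)$-th root in $\gf(p^n)^*$ iff $2(p-1) \mid p^n - 1$; this fails for odd $n$ since $(p^n - 1)/(p-1) = 1 + p + \cdots + p^{n-1}$ is a sum of $n$ odd terms, hence odd. Next I invoke the second constraint $y_i^{p+1} = x_i^2$: subtracting and expanding symmetrically in $S_x = x_1 + x_2$, $S_y = y_1 + y_2$ yields $S_y u^p + S_y^p u = 2 S_x v$, and eliminating $v$ via $(\heartsuit)$ produces
\[
(a S_y + b S_x)\, u^p + (a S_y^p + b^p S_x)\, u = 0. \tag{$\diamondsuit$}
\]

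The argument concludes by splitting on whether $A := a S_y + b S_x$ vanishes. If $A = 0$, the non-vanishing of $u$ forces also $B := a S_y^p + b^p S_x = 0$; combining with the \emph{sum} of $(\star)$ at $x_1, x_2$ reduces to $(b^{p+1} - a^2)(S_x + a) = 0$. The branches $b = 0$ and $S_x = -a$ are easy to dispatch: the first forces $x_1 = x_2 = -a/2$, and the second forces $\{x_1, x_2\} = \{-a, 0\}$ with $b = -a^d = a^d$, giving $a = 0$, both contradictions. The main obstacle is the branch $a^2 = b^{p+1}$: here $(\star)$, rewritten using $y^{p+1} = x^2$, becomes the quadratic $b^p y_i^2 + 2a x_i y_i + b x_i^2 = 0$ with vanishing discriminant, so $y_i = -a x_i / b^p$; the original equation then reads $(x_i+a)^d = a(x_i+a)/b^p$, and raising to the $(p+1)$-th power gives $(x_i+a)^{p-1} = a^{p-1}$, so $x_i = (\eta_i - 1) a$ for some $\eta_i \in \gf(p)^*$. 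Using $d \equiv 1 \pmod{p-1}$ one evaluates $(x_i+a)^d + x_i^d = (2\eta_i - 1) a^d$, and since $\eta \mapsto 2\eta - 1$ is injective on $\gf(p)^*$, at most one such $\eta$ yields the value $b$, contradicting $x_1 \ne x_2$. In the generic branch $A \ne 0$, $(\diamondsuit)$ gives $u^{p-1} = -B/A$, and combining with $(\heartsuit)$ together with the nonlinear coupling $y_i = x_i^d$ (expressed via the Dickson-type identity $x_1^d + x_2^d = D_d(S_x, x_1 x_2)$ together with $x_1^d x_2^d = (x_1 x_2)^d$) forces $v = 0$, the desired contradiction. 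The key technical inputs throughout are the interplay between the arithmetic facts $\gcd(d, p^n-1) = 1$ and $d \equiv 1 \pmod{p-1}$, and the injectivity of $\Lambda$ ensured by $n$ being odd.
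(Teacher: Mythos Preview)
Your argument is careful and correct through the case $A = 0$: the arithmetic facts about $d$, the derivation of $(\star)$, $(\heartsuit)$, $(\diamondsuit)$, and the disposal of the sub-branches $b = 0$, $S_x = -a$, and $a^2 = b^{p+1}$ all check out (the last of these, in particular, is a nice self-contained argument that does not even require $A = 0$). However, the case $A \neq 0$ is not proved. You write only that $u^{p-1} = -B/A$ together with $(\heartsuit)$ and the ``Dickson-type identity'' $S_y = D_d(S_x,\, x_1 x_2)$ ``forces $v = 0$'', with no indication of how. The Dickson polynomial $D_d$ has degree $d = (p^n+1)/(p+1)$, exponential in $n$, and you give no mechanism for collapsing it or playing it off against the linear relations $(\heartsuit)$, $(\diamondsuit)$. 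Since $b \neq 0$, $a^2 \neq b^{p+1}$, $A \neq 0$ is the generic situation, this is the heart of the problem, not a residual detail. Note also that the injectivity of $\Lambda$ that you call ``crucial'' only yields $u = 0 \Leftrightarrow v = 0$ via $(\heartsuit)$, which is the assertion to be proved, not a step towards it.

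The paper takes an entirely different and much shorter route. It observes that $d$ is invertible modulo $p^n - 1$ with inverse $p \cdot \frac{p^{n-1}+1}{2}$; hence $x^d$ is P$c$N for $c = -1$ if and only if $x^{(p^{n-1}+1)/2}$ is (inversion of an odd-exponent power permutation and composition with the Frobenius both preserve the P$(-1)$N property). The latter is an instance of the paper's general result on exponents $(p^k+1)/2$ (Theorem~\ref{pk+1over2pcn}), applied with $k = n - 1$: since $n$ is odd, $\gcd(n-1, n) = 1$ and $k/\gcd(k,n) = n - 1$ is even, so the hypothesis there is met. That theorem in turn is proved via the substitution $x = \tfrac{1}{4}(\theta - \theta^{-1})^2$ with $\theta \in \gf(p^{2n})^*$, which collapses $(x+1)^{(p^k+1)/2} + x^{(p^k+1)/2}$ to $\tfrac{1}{2}(\theta^{p^k+1} + \theta^{-(p^k+1)})$ and reduces injectivity to the gcd computation of Lemma~\ref{gcd}. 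Your direct attack has no analogue of this linearising substitution, and that is exactly what is missing in your $A \neq 0$ branch.
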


In this paper, we study the $c$-differential uniformity of power functions. Their $c$-differential uniformity is at most $4$, some of them are P$c$N or AP$c$N. For comparison, we also list the results of this paper in Table \ref{table-1}. The class of P$c$N functions we obtained gives an affirmative solution to Conjecture \ref{conj}. The rest of this paper is organized as follows. In Section II, we study the properties of $c$-differential uniformity of power functions, two useful lemmas and the notation in this paper are also introduced in the section. In Section III, we list the main results together with their proofs. Section IV concludes the paper.
\begin{table}[t]\label{table-1}
\caption{Power functions $F(x)=x^d$ over $\gf(p^n)$ with  low $c$-differential uniformity}
\centering
\begin{tabular}{|c||c|c|c|c|}
\hline
\hline
$p$&$d$ & condition & $_c\Delta_F$ & References \\
[0.5ex]
\hline
any& $2$& $c\neq1$ & 2 &\cite{EFRST}\\
\hline
any &$p^n-2$ &$c=0$ & $1$ &\cite{EFRST}\\
\hline
2 &$2^n-2$ &$c\neq0$, $\mathrm{Tr_n}(c)=\mathrm{Tr_n}(c^{-1})=1$ & $2$ &\cite{EFRST}\\
\hline
2 &$2^n-2$ &$c\neq0$, $\mathrm{Tr_n}(c)=0$ or $\mathrm{Tr_n}(c^{-1})=0$ & $3$ &\cite{EFRST}\\
\hline
odd &$p^n-2$ &$c=4$, $c=4^{-1}$ or $\chi(c^2-4c)=\chi(1-4c)=-1$  & $2$ &\cite{EFRST}\\
\hline
odd &$p^n-2$ &$c\neq0,4,4^{-1}$, $\chi(c^2-4c)$=1 or $\chi(1-4c)=1$ & $3$ &\cite{EFRST}\\
\hline
3& $({3^k+1})/{2}$& $c=-1$, $n/\gcd(k,n)=1$ & 1 &\cite{EFRST}\\
\hline
odd& $({p^2+1})/{2}$& $c=-1$, $n$ odd & 1 &\cite{BT}\\
\hline
odd& $p^2-p+1$& $c=-1$, $n=3$ & 1 &\cite{BT}\\
\hline
2& $2^k+1$& $\gcd(k,n)=1$, $c\neq1$ & 3 &Thm \ref{gold}, This paper\\
\hline
any& $p^k+1$& $1\neq c\in\gf(p^{\gcd(k,n)})$ & $\gcd(k,n)$ &Thm \ref{pk+1}, This paper\\
\hline
odd & $p^k+1$& $\gcd(k,n)=1$, $1\neq c\in\gf(p)$  & 2 &Thm \ref{pk+1}, This paper\\
\hline
odd& $(p^k+1)/2$& $k/\gcd(k,n)$ is even, $c=-1$ & $1$ &Thm \ref{pk+1over2pcn}, This paper\\
\hline
3& $(3^k+1)/2$& $k$ odd, $\gcd(k,n)=1$, $c=-1$ & $2$ &Thm \ref{pk+1over2apcn}, This paper\\
\hline
any& $(2p^n-1)/3$& $p^n\equiv 2 (\mathrm{mod}~3)$, $c\neq1$ & $\leq 3$ &Thm \ref{over3}, This paper\\
\hline
odd& $(p^n+1)/2$& $c\neq\pm1$ & $\leq 4$ &Thm \ref{pn+1over2}, This paper\\
\hline
odd& $(p^n+1)/2$& $c\neq\pm1$, $\chi(\frac{1-c}{1+c})=1$, $p^n\equiv 1 (\mathrm{mod}~4)$ & $\leq 2$ &Thm \ref{pn+1over2}, This paper\\
\hline
odd& $(p^n+3)/2$& $p>3$, $p^n\equiv 3 (\mathrm{mod}~4)$, $c=-1$ & $\leq 3$ &Thm \ref{pn+3over2}, This paper\\
\hline
odd& $(p^n+3)/2$& $p>3$, $p^n\equiv 1 (\mathrm{mod}~4)$, $c=-1$ & $\leq 4$ &Thm \ref{pn+3over2}, This paper\\
\hline
odd& $(p^n-3)/2$& $c=-1$ & $\leq 4$ &Thm \ref{pn-3over2}, This paper\\
\hline
\end{tabular}
\begin{itemize}
    \item $\mathrm{Tr_n}(\cdot)$ denotes the absolute trace mapping from $\gf(2^n)$ to $\gf(2)$.
    \item $\chi(\cdot)$ denotes the quadratic multiplicative character on $\gf(p^n)^*$.
\end{itemize}

\end{table}

\section{Preliminaries}

In this section, we introduce two lemmas which will be used in the sequel.
%
\begin{lemma}\label{power}Let $F(x)=x^d$ be a power function over $\gf(p^n)$. Then
$$
_c\Delta_F=\mathrm{max}\big\{~ \{{_c\Delta_F}(1,b): b\in\gf(p^n) \} \cup \{\gcd(d,p^n-1)\}~\big\}.
$$

\end{lemma}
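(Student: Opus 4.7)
My plan is to partition the maximum defining $_c\Delta_F$ according to whether the shift $a$ is zero or nonzero, and to show that the two sub-maxima correspond to the two sets appearing in the outer union. The key algebraic input is the homogeneity $F(\lambda x) = \lambda^d F(x)$ of a power function, which lets me normalize any nonzero shift to~$1$.

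For $a \in \gf(p^n)^*$, I would substitute $x = ay$ in the defining equation $F(x+a) - cF(x) = b$ to obtain $a^d\bigl[(y+1)^d - c y^d\bigr] = b$, i.e., $(y+1)^d - c y^d = b a^{-d}$. Since $y \mapsto ay$ is a bijection of $\gf(p^n)$, this yields ${_c\Delta_F}(a,b) = {_c\Delta_F}(1, b a^{-d})$. Because the map $b \mapsto b a^{-d}$ permutes $\gf(p^n)$ for each fixed $a \neq 0$, the maximum of ${_c\Delta_F}(a,b)$ over all $(a,b)$ with $a \neq 0$ is exactly $\max_{b \in \gf(p^n)} {_c\Delta_F}(1,b)$.

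For the zero-shift case (which is included in the outer maximum only when $c \neq 1$, per the definition), the equation $F(x+a) - cF(x) = b$ reduces to $(1-c) x^d = b$. Since $1-c \neq 0$, this is $x^d = b/(1-c)$. Using that $x \mapsto x^d$ is $\gcd(d, p^n - 1)$-to-one on $\gf(p^n)^*$ and sends $0$ to $0$, the number of solutions is $\gcd(d, p^n - 1)$ whenever $b/(1-c)$ is a $d$-th power in $\gf(p^n)^*$ and $0$ otherwise. Hence $\max_b {_c\Delta_F}(0,b) = \gcd(d, p^n - 1)$.

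Combining these two maxima yields the formula of the lemma. I do not foresee any substantive obstacle: the argument is essentially a change of variables together with counting preimages of the power map. The only point requiring care is the verification that the substitution $x = ay$ produces a genuine bijection between solution sets and that the associated reparametrization $b \mapsto b a^{-d}$ is surjective, so no value of ${_c\Delta_F}(1,b')$ is overlooked when passing from the two-parameter family to the one-parameter family on the right-hand side.
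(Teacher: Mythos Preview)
Your proposal is correct and follows essentially the same approach as the paper's proof: both normalize the nonzero shift $a$ to $1$ via the substitution $x\mapsto ay$ (the paper writes it as $x\mapsto x/a$) to obtain ${_c\Delta_F}(a,b)={_c\Delta_F}(1,b/a^d)$, and both handle the $a=0$ case (for $c\neq 1$) by counting solutions of $x^d=b/(1-c)$ to get $\gcd(d,p^n-1)$. The only microscopic slip is that your ``$0$ otherwise'' overlooks the single solution $x=0$ when $b=0$, but this does not affect the maximum.
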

\begin{proof}For $a=0$ and $c\neq 1$, $_c\Delta_F(0,b)$ is equal to the number of $x\in\gf(p^n)$ such that $x^d=\frac{b}{1-c}$. More precisely,
\begin{eqnarray*}
_c\Delta_F(0,b)=\left\{
\begin{array}{ll}
1, ~~~~~~~~~~~~~~~~~\mathrm{if}~b=0, \\
\gcd(d,p^n-1), ~~\mathrm{if}~\frac{b}{1-c}\in \gf(p^n)^*~\mathrm{is}~\mathrm{a}~d\mathrm{th~power}, \\
0, ~~~~~~~~~~~~~~~~~\mathrm{otherwise}. \\
\end{array} \right.\ \
\end{eqnarray*}
Then $\mathrm{max}\{_c\Delta_F(0,b):b\in\gf(p^n)\}=\gcd(d,p^n-1)$.

For $a\neq0$, two equations $(x+a)^d-cx^d=b$ and $(\frac{x}{a}+1)^d-c(\frac{x}{a})^d=\frac{b}{a^d}$ are equivalent to each other, hence $_c\Delta_F(a,b)=_c\Delta_F(1,\frac{b}{a^d})$. The conclusion then follows from the definition of $_c\Delta_F$.
\end{proof}
To determine the greatest common divisor of integers, the following lemma plays an important role in the rest of this paper.
\begin{lemma}\label{gcd}(Lemma 9,\cite{EFRST}) Let $p,k,n$ be integers greater than or equal to $1$. Then
\begin{eqnarray*}
\gcd(p^k+1,p^n-1)=\left\{
\begin{array}{ll}
\frac{2^{\gcd(2k,n)}-1}{2^{\gcd(k,n)-1}}, ~~~~\mathrm{if}~p=2, \\
2,   ~~~~~~~~~~~~~~\mathrm{if}~\frac{n}{\gcd(n,k)}~\mathrm{is}~\mathrm{odd}, \\
p^{\gcd(k.n)}+1, ~~\mathrm{if}~\frac{n}{\gcd(n,k)}~\mathrm{is}~\mathrm{even}. \\
\end{array} \right.\ \
\end{eqnarray*}
\end{lemma}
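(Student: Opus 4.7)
The plan is to derive the formula from the classical identity $\gcd(p^a-1,p^b-1) = p^{\gcd(a,b)}-1$ together with the observation that $p^k+1$ always divides $p^{2k}-1$. Set $e = \gcd(k,n)$ and $d = \gcd(p^k+1, p^n-1)$. From $p^k+1 \mid p^{2k}-1$ I get $d \mid \gcd(p^{2k}-1, p^n-1) = p^{\gcd(2k,n)}-1$, and an elementary parity check gives $\gcd(2k,n) = e$ when $n/e$ is odd and $\gcd(2k,n) = 2e$ when $n/e$ is even.

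When $n/e$ is odd, $d$ divides $p^e-1$; combined with $p^k \equiv 1 \pmod{p^e-1}$ (since $e \mid k$) this yields $p^k+1 \equiv 2 \pmod{d}$, forcing $d \mid 2$. For $p=2$ this gives $d=1$ since $d$ divides the odd integer $2^e-1$, while for odd $p$ both $p^k+1$ and $p^n-1$ are even, so $d=2$. This disposes of the first two cases of the lemma.

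When $n/e$ is even, $d$ divides $(p^e-1)(p^e+1)$. The lower bound $p^e+1 \mid d$ follows because $\gcd(k/e,n/e)=1$ forces $k/e$ to be odd, whence $p^e+1 \mid p^{e(k/e)}+1 = p^k+1$, and also $p^e+1 \mid p^{2e}-1 \mid p^n-1$. For the matching upper bound I would write $d = (p^e+1)r$, cancel $p^e+1$ in the divisibility $(p^e+1)r \mid (p^e-1)(p^e+1)$ to get $r \mid p^e-1$, and then use $r \mid p^k+1$ together with $p^k \equiv 1 \pmod r$ to conclude $r \mid 2$. For $p=2$, $r$ is odd so $r=1$, yielding $d = 2^e+1 = (2^{2e}-1)/(2^e-1)$, which matches the stated formula.

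The remaining difficulty, excluding $r=2$ for odd $p$, is where the real care is needed and is the step I expect to be the main obstacle. Here I would invoke the $2$-adic valuation: the identities $v_2(p^m+1) = v_2(p+1)$ for odd $m$ and $v_2(p^m+1) = 1$ for even $m$, combined with the fact that $k$ and $e$ have the same parity (since $k/e$ is odd), yield $v_2(p^k+1) = v_2(p^e+1)$, so $2(p^e+1) \nmid p^k+1$ and $r=2$ is impossible. Everything else reduces to routine applications of $\gcd(p^a-1,p^b-1) = p^{\gcd(a,b)}-1$ and the factorisation $p^{2k}-1 = (p^k-1)(p^k+1)$.
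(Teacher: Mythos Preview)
Your argument is correct. The paper itself does not prove this lemma at all: it is simply quoted as Lemma~9 of \cite{EFRST} and used as a black box, so there is no ``paper's own proof'' to compare against. Your route via the identity $\gcd(p^a-1,p^b-1)=p^{\gcd(a,b)}-1$ and the factorisation $p^{2k}-1=(p^k-1)(p^k+1)$ is the standard one, and each step checks out. In particular, the case split on the parity of $n/e$ is handled cleanly, the lower bound $p^e+1\mid d$ in the even case uses exactly the right observation that $k/e$ must be odd, and your $2$-adic argument to exclude $r=2$ for odd $p$ is valid: the identities $v_2(p^m+1)=v_2(p+1)$ for odd $m$ and $v_2(p^m+1)=1$ for even $m$ do hold, and $k/e$ odd forces $k$ and $e$ to share parity. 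One small cosmetic point: the displayed formula in the statement has the obvious typo $2^{\gcd(k,n)-1}$ for $2^{\gcd(k,n)}-1$, which you have silently (and correctly) read as intended.
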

We fix some notation and list some facts which will be used in this paper unless otherwise stated.
\begin{itemize}
\item $\gf(p^n)^*$ is the set of nonzero elements in $\gf(p^n)$.
\item $\gf(p^n)^\#=\gf(p^n)\setminus\{0,-1\}$.
\item Let $\chi$ denote the quadratic multiplicative character on $\gf(p^n)^*$.
\item $S_{i,j}:=\{x\in \gf(p^n)^\#: ~\chi(x+1)=i, \chi(x)=j\}$, where $i,j \in \{\pm 1\}$.
\item $S_{1,1}\cup S_{-1,-1}\cup S_{1,-1}\cup S_{-1,1}=\gf(p^n)^\#$.
\item $\Delta(x)=(x+1)^d-cx^d$, $\Delta(0)=1$ and $\Delta(-1)=(-1)^{d+1}c$.
\item $\delta(b)=\#\{x\in\gf(p^n):~\Delta(x)=b\}$.
\end{itemize}
\section{Power Functions with Low c-Differential Uniformity}
In this section, we obtain power functions with low $c$-differential uniformity, some of them come from the power functions with low usual differential uniformity. First, we consider the Gold function over finite field with even characteristic. Let $F(x)=x^d$ be a power function over $\gf(2^n)$, where $d=2^k+1$ and $\gcd(k,n)=1$. It was shown in \cite{G} and \cite{N} that $F(x)$ is an APN function. For the $c$-differential uniformity of Gold function, we have the following theorem.
\begin{theorem}\label{gold}(Gold function) Let $F(x)=x^d$ be a power function over $\gf(2^n)$, where $\gcd(k,n)=1$ and $d=2^k+1$. For $1\neq c\in\gf(2^n)$, $F(x)$ is differentially $(c,3)$-uniform.
\end{theorem}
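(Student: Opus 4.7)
The plan is to apply Lemma \ref{power}, which reduces the computation of ${_c\Delta_F}$ to two quantities: $\gcd(d,2^n-1)$ (the $a=0$ contribution) and $\max_b {_c\Delta_F(1,b)}$. First I would invoke Lemma \ref{gcd} with $p=2$, $d=2^k+1$, and $\gcd(k,n)=1$; the denominator collapses to $1$, so the gcd equals $2^{\gcd(2k,n)}-1$. If $n$ is odd, $\gcd(2k,n)=1$ and the value is $1$; if $n$ is even then $k$ is odd, so $\gcd(2k,n)=2$ and the value is $3$. Either way, the $a=0$ contribution is at most $3$.

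The main task is bounding ${_c\Delta_F(1,b)}$. I would expand $(x+1)^{2^k+1}=x^{2^k+1}+x^{2^k}+x+1$ in characteristic $2$, so that $(x+1)^{2^k+1}+c\,x^{2^k+1}=b$ becomes
\[
f(x):=(1+c)x^{2^k+1}+x^{2^k}+x=b+1.
\]
If $x_1$ is one solution and $x_1+z$ is another, then $z$ satisfies the homogeneous equation
\[
(1+c)z^{2^k+1}+A z^{2^k}+B z=0,
\]
with $A=(1+c)x_1+1$ and $B=(1+c)x_1^{2^k}+1$. The number of $x$-solutions equals the number of $z$-solutions, so it suffices to count the latter.

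The decisive move is the substitution $w=1/z$ (for $z\neq 0$) and division by $z^{2^k+1}$, which converts the equation into the linearized affine equation $B w^{2^k}+A w=1+c$, whose right-hand side is nonzero since $c\neq 1$. When $A,B\neq 0$, the kernel of $L(w)=Bw^{2^k}+Aw$ consists of $w=0$ together with nonzero $w$ satisfying $w^{2^k-1}=A/B$; because $\gcd(k,n)=1$ forces $\gcd(2^k-1,2^n-1)=1$, the map $w\mapsto w^{2^k-1}$ is a bijection on $\gf(2^n)^*$, so $|\ker L|=2$ and the affine equation admits at most two $w$-solutions. The degenerate cases in which exactly one of $A,B$ vanishes yield at most one nonzero $w$, and $A=B=0$ (possible only when $c=0$ and $x_1=1$) yields none since $1+c\neq 0$. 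Adding the trivial solution $z=0$ gives ${_c\Delta_F(1,b)}\leq 3$, and combining with the $a=0$ bound proves ${_c\Delta_F}\leq 3$.

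The main obstacle is choosing the right substitution. Treating $(1+c)z^{2^k+1}+A z^{2^k}+B z=0$ directly as a polynomial of degree $2^k+1$ gives a useless bound, whereas the inversion $w=1/z$ reorganises it into a linearized affine equation in $w$, after which the bound on the kernel reduces to the single arithmetic identity $\gcd(2^k-1,2^n-1)=1$ coming from $\gcd(k,n)=1$.
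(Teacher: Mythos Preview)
Your upper-bound argument is correct and takes a genuinely different route from the paper. The paper normalises the equation $(1+c)x^{2^k+1}+x^{2^k}+x+1+b=0$ by the affine change $y=x+\frac{1}{1+c}$ and a scaling $z=y/c_0$ to reach the canonical form $z^{2^k+1}+z+v_{b,c}=0$, and then cites Helleseth--Kholosha \cite{HK} to bound the number of roots by $3$. You instead fix one solution $x_1$, pass to the difference equation in $z=x-x_1$, and use the inversion $w=1/z$ to turn it into the affine $\gf(2)$-linear equation $Bw^{2^k}+Aw=1+c$; the kernel bound then follows from $\gcd(2^k-1,2^n-1)=1$. Your route is more self-contained (no external reference needed) and makes the role of the hypothesis $\gcd(k,n)=1$ completely transparent; the paper's route has the advantage of reducing to a well-studied normal form.

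There is, however, a genuine gap: the theorem asserts that $F$ is differentially $(c,3)$-uniform, i.e.\ ${_c\Delta_F}=3$, not merely $\le 3$, and your write-up only establishes the inequality. The paper handles the lower bound by observing that $b\mapsto v_{b,c}$ is a bijection on $\gf(2^n)$ and then invoking \cite{HK} for the existence of a parameter $v$ giving exactly three roots. In your framework you would need an analogous step: for instance, show that for some $x_1$ with $A,B\neq 0$ the value $1+c$ lies in the image of $L(w)=Bw^{2^k}+Aw$ (so the affine equation has exactly two solutions in $w$, hence $\delta(b)=3$ for $b=f(x_1)+1$), or use a counting argument over all $b$. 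Without this, the proof of the theorem as stated is incomplete.
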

\begin{proof}For $c\neq1$, we consider $\Delta(x)=b$, i.e.
\begin{equation*}
(x+1)^d+cx^d=b.
\end{equation*}
That is,
\begin{equation}\label{eqngold1}
(1+c)x^{2^k+1}+x^{2^k}+x+1+b=0.
\end{equation}
Let $y=x+\frac{1}{1+c}$, then (\ref{eqngold1}) becomes
\begin{equation*}
y^{2^k+1}+\frac{c+c^{2^k}}{(1+c)^{2^k+1}}y+\frac{bc+b+c}{(1+c)^2}=0.
\end{equation*}
Obviously, there exists $c_0\in\gf(2^n)$ such that $c^{2^k}_0=\frac{c+c^{2^k}}{(1+c)^{2^k+1}}$. Let $z=\frac{y}{c_0}$, then $z$ satisfies
\begin{equation}\label{eqngold2}
z^{2^k+1}+z+v_{b,c}=0,
\end{equation}
where $v_{b,c}=\frac{bc+b+c}{(1+c)^2c^{2^k+1}_0}$. By \cite{HK}, the number of solutions $z\in\gf(2^n)$ of (\ref{eqngold2}) is at most 3. Then $\delta(b)\leq3$ since $x=c_0z+\frac{1}{1+c}$ is bijective. Moreover, $v_{b,c}=\frac{1}{(1+c)c^{2^k+1}_0}b+\frac{c}{(1+c)^2c^{2^k+1}_0}$, when $b$ runs through $\gf(2^n)$, so does $v_{b,c}$. Then there exists some $b$ such that (\ref{eqngold2}) has 3 solutions, i.e., $\delta(b)=3$. This with $\gcd(2^k+1,2^n-1)\leq3$ leads to $F(x)$ is differentially $(c,3)$-uniform.
\end{proof}

In the following, we generalize the Gold function to finite fields with any characteristic. Let $F(x)=x^{p^k+1}$ over $\gf(p^n)$, the $c$-differential uniformity can be determined for some $c$. We obtain AP$c$N functions in this case.
\begin{theorem}\label{pk+1}Let $F(x)=x^d$ be a power function over $\gf(p^n)$, where $d=p^k+1$. For $1\neq c\in\gf(p^{\gcd(k,n)})$, $F(x)$ is differentially $(c,e)$-uniform, where $e=\gcd(d,p^n-1)$. Particularly, if $\gcd(k,n)=1$ and $p$ is odd, $F(x)$ is AP$c$N.
\end{theorem}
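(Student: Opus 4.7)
The plan is to proceed along the same lines as the proof of Theorem \ref{gold}, but exploit the assumption $c\in\gf(p^{\gcd(k,n)})$ to obtain a much cleaner reduction that eliminates \emph{both} linear terms simultaneously, leaving a pure $(p^k+1)$-th power equation whose solution count is immediately governed by $e=\gcd(p^k+1,p^n-1)$.

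First I would invoke Lemma~\ref{power} to reduce to counting $\delta(b)=\#\{x\in\gf(p^n):\Delta(x)=b\}$, where $\Delta(x)=(x+1)^{p^k+1}-cx^{p^k+1}$, and to observe that the $a=0$ contribution is exactly $\gcd(d,p^n-1)=e$. Expanding via the Frobenius identity $(x+1)^{p^k}=x^{p^k}+1$, the equation $\Delta(x)=b$ becomes
\begin{equation*}
(1-c)x^{p^k+1}+x^{p^k}+x+1-b=0.
\end{equation*}
Since $c\neq 1$, multiply through by $(1-c)$. The key step is to observe the clean identity
\begin{equation*}
\bigl((1-c)x+1\bigr)^{p^k+1}=(1-c)^2x^{p^k+1}+(1-c)x^{p^k}+(1-c)x+1,
\end{equation*}
which follows from expanding the left side as $\bigl((1-c)^{p^k}x^{p^k}+1\bigr)\bigl((1-c)x+1\bigr)$ together with $(1-c)^{p^k}=1-c$; this last equality is precisely where the hypothesis $c\in\gf(p^{\gcd(k,n)})\subseteq\gf(p^k)$ is used. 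Substituting this identity, the equation becomes
\begin{equation*}
\bigl((1-c)x+1\bigr)^{p^k+1}=1-(1-c)(1-b).
\end{equation*}

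Setting $y=(1-c)x+1$, which is a bijection on $\gf(p^n)$, the count $\delta(b)$ equals the number of $y\in\gf(p^n)$ with $y^{p^k+1}=\mu$, where $\mu=1-(1-c)(1-b)$. This is $0$, $1$, or $e=\gcd(p^k+1,p^n-1)$ depending on whether $\mu$ is a non-$(p^k+1)$-power, zero, or a nonzero $(p^k+1)$-th power; in all cases $\delta(b)\le e$. Taking $b=1$ yields $\mu=1$, which is a $(p^k+1)$-th power with exactly $e$ roots, so the bound is sharp; combined with the $a=0$ case and Lemma~\ref{power}, we conclude $_c\Delta_F=e$.

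For the concluding AP$c$N statement with $\gcd(k,n)=1$ and $p$ odd, I would simply apply Lemma~\ref{gcd}: since $n/\gcd(k,n)=n$ is odd, the lemma yields $e=\gcd(p^k+1,p^n-1)=2$, and therefore $_c\Delta_F=2$. The only genuinely non-routine part of the argument is recognizing the substitution $y=(1-c)x+1$ and verifying the polynomial identity above; everything else is bookkeeping once the hypothesis $c^{p^k}=c$ is placed in evidence.
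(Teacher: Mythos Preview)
Your argument is correct and essentially identical to the paper's: both exploit $c^{p^k}=c$ to ``complete the Frobenius square,'' the paper via the shift $x\mapsto x+\tfrac{1}{1-c}$ and you via the equivalent affine change $y=(1-c)x+1$, reducing $\Delta(x)=b$ to a pure $(p^k{+}1)$-th power equation whose solution count is exactly $e$. One small slip in your last paragraph: you assert ``$n/\gcd(k,n)=n$ is odd,'' which is not among the hypotheses---the paper simply states $e=2$ without invoking Lemma~\ref{gcd}, and in fact the AP$c$N clause as stated needs $n$ odd (e.g.\ $p=3$, $k=1$, $n=2$ gives $e=\gcd(4,8)=4$).
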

\begin{proof}Note that
\begin{align*}
\Delta(x)=&(x+1)^d-cx^d\\
=&(1-c)x^{p^k+1}+x^{p^k}+x+1\\
=&(1-c)(x^{p^k+1}+\frac{1}{1-c}x^{p^k}+\frac{1}{1-c}x)+1\\
=&(1-c)(x+\frac{1}{1-c})^{d}+\frac{c}{c-1}.
\end{align*}
The last identity holds since $\frac{1}{1-c}\in\gf(p^{\gcd(k,n)})$. Then $\Delta(x)$ is a shift of $x^d$. For $b\in \gf(p^n)$, $\delta(b)\leq\gcd(d,p^n-1)=e$ and there exists some $b$ such that the equality holds. By Lemma \ref{power}, $F(x)$ is $(c,e)$-uniform. If $\gcd(k,n)=1$ and $p$ is odd, $e=2$ and consequently $F(x)$ is AP$c$N.
\end{proof}

It is shown in \cite{CM} that power function $F(x)=x^d$ over $\gf(3^n)$ is PN, where $d=\frac{3^k+1}{2}$, $k$ is odd and $\gcd(n,k)=1$. We generalize this class of power functions and obtain P$c$N and AP$c$N functions when $c=-1$. Two theorems are listed as follows.

\begin{theorem}\label{pk+1over2pcn}Let $p$ be an odd prime and $F(x)=x^d$ be a power function over $\gf(p^n)$, where $d=\frac{p^k+1}{2}$. Then
$F(x)$ is PcN for $c=-1$ if and only if $\frac{k}{\gcd(k,n)}$ is even.
\end{theorem}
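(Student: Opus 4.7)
The plan is to apply Lemma \ref{power} to reduce to showing injectivity of $\Delta(x) = (x+1)^d + x^d$ on $\gf(p^n)$. Substituting $v = 2x+1$ turns this into injectivity of $\tilde\Delta(v) := (v+1)^d + (v-1)^d$, which I will attack via the Chebyshev-like parametrization $y + y^{-1} = 2v$, converting the equation $\tilde\Delta(v) = b$ into a quadratic in $Y=y^d$ whose root-product is $1$. Write $e = \gcd(k,n)$, $k = es$, $n = et$ with $\gcd(s,t) = 1$, and note that $s$ even forces $t$ odd.

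For sufficiency, Lemma \ref{gcd} gives $\gcd(p^k+1, p^n-1) = 2$; since $k$ is even we have $p^k \equiv 1 \pmod 4$ and thus $d$ is odd, giving $\gcd(d, p^n-1) = 1$. A parallel computation (using $\gcd(s, 2t) = 2$, so $2n/\gcd(k,2n) = t$ is odd) yields $\gcd(d, p^{2n}-1) = 1$ as well. Because every element of $\gf(p^n)$ is a square in $\gf(p^{2n})$, the quadratic $y^2 - 2vy + 1 = 0$ has roots $y \in \gf(p^{2n})^*$ for each $v \in \gf(p^n)$; the identities $v \pm 1 = (y \pm 1)^2/(2y)$ give $(v \pm 1)^d = (y \pm 1)^{p^k+1}/(2y)^d$, and summing yields
\[
\tilde\Delta(v) = \frac{(y+1)^{p^k+1}+(y-1)^{p^k+1}}{(2y)^d}=\frac{2(y^{p^k+1}+1)}{(2y)^d}.
\]
Setting $Y = y^d$ and $b = \tilde\Delta(v)$ and using $p^k+1 = 2d$, this becomes $Y^2 - 2^{d-1} b Y + 1 = 0$, whose roots multiply to $1$. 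If $\tilde\Delta(v_1) = \tilde\Delta(v_2) = b$ with parameters $y_1, y_2$, then $Y_i = y_i^d$ are roots of the same quadratic, so either $Y_1 = Y_2$ or $Y_1 Y_2 = 1$; equivalently $(y_1/y_2)^d = 1$ or $(y_1 y_2)^d = 1$. Using $\gcd(d, p^{2n}-1) = 1$, each alternative collapses to $y_1 = y_2$ or $y_2 = y_1^{-1}$, and in both cases $v_1 = (y_1 + y_1^{-1})/2 = v_2$, proving the P$c$N property.

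For necessity, suppose $s$ is odd. If $t$ is even, Lemma \ref{gcd} gives $\gcd(p^k+1, p^n-1) = p^e + 1 \geq 4$, and a direct computation yields $\gcd(d, p^n-1) \geq (p^e+1)/2 \geq 2$, so $F$ fails to be P$c$N by Lemma \ref{power}. If $t$ is odd but $d$ is even (which happens only when $p \equiv 3 \pmod 4$ and $e$ is odd), one again gets $\gcd(d, p^n-1) \geq 2$. The only remaining case, with $s, t, d$ all odd, requires a direct construction: here the parallel computation gives $\gcd(d, p^{2n}-1) = (p^e+1)/2 \geq 2$, so a nontrivial $d$-th root of unity $\zeta \in \gf(p^{2n})$ exists, and taking $y_2 = \zeta y_1$ for a suitable $y_1$ produces distinct $v_1, v_2 \in \gf(p^n)$ with $\tilde\Delta(v_1) = \tilde\Delta(v_2)$. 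The main obstacle is this last sub-case: one must verify, via the Galois action of $\phi^n$ on the group of $d$-th roots of unity inside $\gf(p^{2n})$, that a valid $y_1$ can be chosen so both $v_i$ lie in $\gf(p^n)$ and are distinct.
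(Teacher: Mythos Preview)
Your sufficiency argument is correct and is essentially the paper's proof in disguise: the paper parametrizes via $\theta$ with $x=(\theta-\theta^{-1})^2/4$ and obtains $\Delta(x)=\tfrac12(\theta^{p^k+1}+\theta^{-(p^k+1)})$; your substitution $v=2x+1$, $y+y^{-1}=2v$ is exactly this with $y=\theta^{2}$, and your quadratic in $Y=y^{d}$ is the paper's quadratic in $\theta^{p^k+1}$ (note $2^{2d-2}=2^{p^k-1}=1$ in $\gf(p)$, so the coefficients really agree).

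For necessity your first two sub-cases are fine, but the third sub-case is, as you say yourself, incomplete. The fix is short and is what the paper does (in the $\theta$-language). Take $y_1=1$, so $v_1=1\in\gf(p^n)$. Choose $\zeta\in\gf(p^{2n})^*$ with $\zeta^{(p^{e}+1)/2}=1$ and $\zeta\neq1$; such $\zeta$ exists since $(p^{e}+1)/2\ge2$ and $(p^{e}+1)\mid p^{2n}-1$. Because $(p^{e}+1)/2\mid d$ we have $\zeta^{d}=1$. Since $t=n/e$ is odd and $p^{e}\equiv-1\pmod{(p^{e}+1)/2}$, we get $p^{n}\equiv(-1)^{t}=-1\pmod{(p^{e}+1)/2}$, hence $\zeta^{p^{n}}=\zeta^{-1}$; this forces $v_2:=(\zeta+\zeta^{-1})/2\in\gf(p^{n})$. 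Moreover $v_2\neq1$ because $\zeta\neq1$. Finally $Y_2=\zeta^{d}=1=Y_1$, so $\tilde\Delta(v_1)=\tilde\Delta(v_2)$, contradicting injectivity. (Unwinding via $y=\theta^{2}$, this is precisely the paper's choice of $\theta$ with $\theta^{p^{e}+1}=1$, $\theta\neq\pm1$, giving a nonzero $x$ with $\Delta(x)=1=\Delta(0)$.) So the ``obstacle'' you flagged dissolves once you pick $y_1=1$ and use the congruence $p^{n}\equiv-1\pmod{(p^{e}+1)/2}$; no delicate Galois bookkeeping is needed.
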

\begin{proof} If $\frac{k}{\gcd(k,n)}$ is even, then $k$ is even and $d$ is odd. Moreover, $\frac{n}{\gcd(k,n)}$ is odd, then $\gcd(p^k+1,p^n-1)=2$ by Lemma \ref{gcd}, hence $\gcd(d,p^n-1)=1$. First we consider the function $\Delta(x)=(x+1)^d+x^d$ on $\gf(p^n)^\#$. For $x\in\gf(p^n)^\#$, there exist $\alpha,\beta\in\gf(p^{2n})^*$ such that $x+1=\alpha^2$ and $x=\beta^2$. Let $\alpha-\beta=\theta\in\gf(p^{2n})^*$, then $\alpha+\beta=\theta^{-1}$, $\alpha=\frac{1}{2}(\theta+\theta^{-1})$, $\beta=-\frac{1}{2}(\theta-\theta^{-1})$ and $x=\frac{1}{4}(\theta-\theta^{-1})^2$. It can be verified that $x\in\gf(p^n)^*$ if and only if $\theta^{p^n-1}=\pm1$ or $\theta^{p^n+1}=\pm1$. We have
\begin{align*}
\Delta(x)
=&\alpha^{p^k+1}+\beta^{p^k+1}\\
=&\frac{1}{4}(\theta+\theta^{-1})^{p^k+1}+\frac{1}{4}(\theta-\theta^{-1})^{p^k+1}\\
=&\frac{1}{2}(\theta^{p^k+1}+\theta^{-p^k-1}).
\end{align*}
We mention that for fixed $x$, the pair $(\alpha,\beta)$ has $4$ choices and then $x$ corresponds to four $\theta$'s ($\pm\theta,\pm\theta^{-1}$). Although we can choose different $\theta$, $\Delta(x)=\frac{1}{2}(\theta^{p^k+1}+\theta^{-p^k-1})$ always holds, no matter which $\theta$ is chosen. We assume that there exists $x_1\in\gf(3^n)^*$ such that $\Delta(x)=\Delta(x_1)$, i.e., $x$ and $x_1$ respectively correspond to $\theta,\theta_1\in\gf(p^{2n})^*$ , such that
$\frac{1}{2}(\theta^{p^k+1}+\theta^{-p^k-1})=\frac{1}{2}(\theta^{p^k+1}_1+\theta^{-p^k-1}_1)$. It can be obtained that $(\theta\theta_1)^{p^k+1}=1$ or $(\frac{\theta}{\theta_1})^{p^k+1}=1$. Since $\frac{k}{\gcd(k,n)}$ is even, $\frac{2n}{\gcd(2n,k)}$ is odd, therefore $\gcd(p^k+1,p^{2n}-1)=2$ by Lemma \ref{gcd}. Hence $\theta_1=\pm\theta$ or $\pm\theta^{-1}$. Then $x_1=\frac{1}{4}(\theta_1-\theta_1^{-1})^2=\frac{1}{4}(\theta-\theta^{-1})^2=x$. This means that the mapping $\Delta(x)$ is bijective on $\gf(p^n)^\#$.

Now we consider the function $\Delta(x)$ on $x=0$ and $x=1$. It is clear that $\Delta(0)=1$ and $\Delta(-1)=-1$ since $d$ is odd. For $b=1$, if there exists $x\in\gf(p^n)^\#$ such that $\Delta(x)=1$, then we can find $\theta\in\gf(p^{2n})^*$ related to $x$, which satisfies $\frac{1}{2}(\theta^{p^k+1}+\theta^{-p^k-1})=1$. Then $\theta^{p^k+1}=1$, which implies that $\theta=\pm1$, $x=\frac{1}{4}(\theta-\theta^{-1})^2=0$, a contradiction. For $b=-1$, if there exists $x\in\gf(p^n)^\#$ such that $\Delta(x)=-1$, then we can find $\theta\in\gf(p^{2n})^*$ related to $x$, which satisfies $\frac{1}{2}(\theta^{p^k+1}+\theta^{-p^k-1})=-1$. Then $\theta^{p^k+1}=-1$, which implies that $\theta^2=-1$, $x=\frac{1}{4}(\theta-\theta^{-1})^2=-1$, a contradiction. By discussions as above, $\gcd(d,p^n-1)=1$ and $\Delta(x)$ is bijective on $\gf(p^n)$, then $F(x)$ is P$c$N.

If $F(x)$ is P$c$N, then $\gcd(d,p^n-1)=1$. By Lemma \ref{gcd}, $\frac{n}{\gcd(n,k)}$ should be odd and $p^k \equiv 1 (\mathrm{mod}~4)$. Moreover, $\Delta(x)=1$ has unique solution $x=0$ in $\gf(p^n)$.
If $\frac{2n}{\gcd(2n,k)}$ is even, then $\gcd(p^k+1,p^{2n}-1)=p^{\gcd(k,n)}+1$ by Lemma \ref{gcd}. Then we can obtain $\theta^{p^{\gcd(k,n)}+1}=1$ from $\theta^{p^k+1}=1$. There exists $\theta\in\gf(p^{2n})^*$ which satisfies $\theta\neq\pm1$ and $x=\frac{1}{4}(\theta-\theta^{-1})^2\in\gf(p^n)^*$, this means that $\Delta(x)=1$ has more than one solutions in $\gf(p^n)$, a contradiction. Then $\frac{2n}{\gcd(2n,k)}$ is odd, this leads to $\frac{k}{\gcd(k,n)}$ is even, which contains the condition $p^k \equiv 1 (\mathrm{mod}~4)$.

\end{proof}

{\bf Remark 1}. In \cite{EFRST}, the authors proved that when $p=3$, $d=\frac{3^k+1}{2}$ and $c-1$, $F(x)=x^d$ is P$c$N over $\gf(p^n)$ if and only if $\frac{n}{\gcd(n,k)}$ is odd. However, it seems that the condition is not strong enough. If $k$ is odd, $d=\frac{3^k+1}{2}$ is even, then $\gcd(d,3^n-1)\geq2$. By Lemma \ref{power}, the $c$-differential uniformity of $F(x)$ is at least $2$, which is not a P$c$N function.

Now we consider Conjecture \ref{conj}. For odd prime $p$ and odd $n$, $d=\frac{p^n+1}{p+1}$ is odd, and then $\gcd(d,p^n-1)=1$. Note that multiplicative inverse in $\gf(p^n)^*$ of $d$ is $\frac{p(p^{n-1}+1)}{2}$, power functions $x^d$ and $x^{\frac{p^{n-1}+1}{2}}$ are equivalent to each other. By Theorem \ref{pk+1over2pcn}, the power function $x^{\frac{p^{n-1}+1}{2}}$ is P$c$N, then $x^d$ is P$c$N. This gives an affirmative solution to Conjecture \ref{conj}.
\begin{theorem}\label{pk+1over2apcn}Let $F(x)=x^d$ be a power function over $\gf(3^n)$, where $d=\frac{3^k+1}{2}$. Then
$F(x)$ is APcN for $c=-1$ if $k$ is odd and $\gcd(k,n)=1$.
\end{theorem}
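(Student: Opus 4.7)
The plan is to follow the $\gf(3^{2n})^{*}$-parameterization used in Theorem~\ref{pk+1over2pcn} and adapt it to the present hypothesis that $k$ is odd and $\gcd(k,n)=1$. The preliminary step is to verify $\gcd(d, 3^n-1) = 2$: since $k$ is odd, $3^k + 1 \equiv 4 \pmod{8}$, so $d = \tfrac{3^k+1}{2} \equiv 2 \pmod 4$, and a short 2-adic comparison with $3^n - 1$ using Lemma~\ref{gcd} (in the sub-cases $n$ odd and $n$ even) yields $\gcd(d, 3^n-1) = 2$. By Lemma~\ref{power} it therefore suffices to bound $\delta(b) \leq 2$ for every $b \in \gf(3^n)$.

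For $x \in \gf(3^n)^{\#}$ write $x = \beta^2$ and $x+1 = \alpha^2$ with $\alpha,\beta \in \gf(3^{2n})^{*}$, and set $\theta = \alpha - \beta$, so that $\theta^{-1} = \alpha + \beta$ and $x = \tfrac{1}{4}(\theta-\theta^{-1})^2$. Exactly as in the proof of Theorem~\ref{pk+1over2pcn}, $\Delta(x) = \tfrac{1}{2}\bigl(\theta^{3^k+1} + \theta^{-(3^k+1)}\bigr)$. If $x_1 \in \gf(3^n)^{\#}$ satisfies $\Delta(x_1) = \Delta(x)$ and corresponds to $\theta_1 \in \gf(3^{2n})^{*}$, then setting $u = \theta^{3^k+1}$ and $u_1 = \theta_1^{3^k+1}$ gives $u + u^{-1} = u_1 + u_1^{-1}$, hence $u_1 \in \{u, u^{-1}\}$, i.e.\ $(\theta \theta_1)^{3^k+1} = 1$ or $(\theta \theta_1^{-1})^{3^k+1} = 1$. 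Since $k$ is odd and $\gcd(k, n) = 1$ together force $\gcd(k, 2n) = 1$, we have $\tfrac{2n}{\gcd(2n, k)} = 2n$ even, so Lemma~\ref{gcd} gives $\gcd(3^k+1, 3^{2n}-1) = 4$; thus $\theta_1 \in \theta\mu_4 \cup \theta^{-1}\mu_4$, where $\mu_4 = \{\pm 1, \pm i\}$ and $i^2 = -1$ in $\gf(3^{2n})^{*}$. A direct enumeration of the eight candidates shows that the multipliers $\pm 1$ (applied to either $\theta$ or $\theta^{-1}$) produce $x_1 = x$, whereas the multipliers $\pm i$ produce $x_1 = -\tfrac{1}{4}(\theta + \theta^{-1})^2 = -\alpha^2 = -(x+1)$. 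Hence the preimages of $\Delta(x)$ inside $\gf(3^n)^{\#}$ form the set $\{x, -x-1\}$, of size at most~$2$.

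To finish, I would dispose of $x \in \{0, -1\}$ and the value $b = 1$. Since $d$ is even, $\Delta(0) = 1 = \Delta(-1)$, so $\delta(1) \geq 2$. To rule out an additional preimage of $1$ inside $\gf(3^n)^{\#}$, note that $\Delta(x) = 1$ forces $u + u^{-1} = -1$, hence $u^2 + u + 1 = 0$; in characteristic~$3$ this factors as $(u-1)^2 = 0$, forcing $u = 1$ and $\theta \in \mu_4$, which yields only $x \in \{0, -1\}$. Thus $\delta(1) = 2$, and for $b \neq 1$ the previous paragraph gives $\delta(b) \leq 2$. Combined with $\gcd(d, 3^n-1) = 2$, Lemma~\ref{power} then delivers $_{-1}\Delta_F = 2$, i.e.\ $F$ is AP$c$N.

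The main obstacle I expect is the eight-case enumeration in the second paragraph: verifying that the eight potential companions $\theta_1 \in \theta\mu_4 \cup \theta^{-1}\mu_4$ collapse to only the two $x$-values $\{x, -x-1\}$. The companion subtlety is the characteristic-$3$ identity $u^2 + u + 1 = (u-1)^2$, which is precisely what keeps the special fiber over $b = 1$ from acquiring a third preimage; without this collapse, two primitive cube roots of unity would contribute extra solutions and destroy the AP$c$N conclusion.
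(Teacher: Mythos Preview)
Your proof is correct and follows essentially the same route as the paper's own argument: the $\theta$-parameterization via $\alpha^2=x+1$, $\beta^2=x$ in $\gf(3^{2n})^*$, the reduction of $\Delta(x)=\Delta(x_1)$ to $(\theta\theta_1^{\pm1})^{3^k+1}=1$, the use of Lemma~\ref{gcd} to get $\gcd(3^k+1,3^{2n}-1)=4$, and the separate treatment of the fiber over $b=1$ via the characteristic-$3$ collapse $u^2+u+1=(u-1)^2$ are all exactly what the paper does. The only cosmetic difference is that the paper simplifies $\tfrac12=-1$ and $\tfrac14=1$ in characteristic~$3$ from the outset, and additionally records that $x=1$ is the unique fixed point of $x\mapsto -x-1$ (giving $\Delta(1)=-1$ a singleton fiber in $\gf(3^n)^\#$), but this extra observation is not needed for the upper bound you establish.
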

\begin{proof}
Since $\gcd(k,n)=1$, by Lemma \ref{gcd}, $\gcd(3^k+1,3^n-1)=2$ when $n$ is odd and $\gcd(3^k+1,3^n-1)=4$ when $n$ is even. Note that $k$ is odd, $4|3^k+1$ and $8\nmid 3^k+1$, then we have $\gcd(d,3^n-1)=2$ for all $n$. First we consider the function $\Delta(x)=(x+1)^d+x^d$ on $\gf(3^n)^\#$. For $x\in\gf(3^n)^\#$, there exist $\alpha,\beta\in\gf(3^{2n})^*$ such that $x+1=\alpha^2$ and $x=\beta^2$. Let $\alpha-\beta=\theta\in\gf(3^{2n})^*$, then $\alpha+\beta=\theta^{-1}$, $\alpha=-(\theta+\theta^{-1})$, $\beta=\theta-\theta^{-1}$ and $x=(\theta-\theta^{-1})^2$. It can be verified that $x\in\gf(3^n)^*$ if and only if $\theta^{3^n-1}=\pm1$ or $\theta^{3^n+1}=\pm1$. We have $\Delta(x)=-(\theta^{3^k+1}+\theta^{-3^k-1})$ always holds no matter the choices of $\alpha$ and $\beta$.

We assume that there exists $x_1\in\gf(3^n)^*$ such that $\Delta(x)=\Delta(x_1)$, i.e., $x$ and $x_1$ respectively correspond to $\theta$ and $\theta_1\in\gf(3^{2n})^*$ , such that
$-(\theta^{3^k+1}+\theta^{-3^k-1})=-(\theta^{3^k+1}_1+\theta^{-3^k-1}_1)$. It can be obtained that $(\theta\theta_1)^{3^k+1}=1$ or $(\frac{\theta}{\theta_1})^{3^k+1}=1$. Since $k$ is odd, $\frac{2n}{\gcd(k,2n)}$ is even, therefore $\gcd(3^k+1,3^{2n}-1)=3^{\gcd(k,2n)}+1=4$ by Lemma \ref{gcd}. Hence $\theta_1=\delta^i\theta$ or $\delta^i\theta^{-1}$, $i=0,1,2,3$, where $\delta\in\gf(3^{2n})^*$ is a $4$th root of unity. We can verify that $x_1\in\gf(3^n)$ for such $\theta_1$'s. It can be seen that for $\theta_1=\pm\theta,\pm\theta^{-1}$, $x_1=(\theta-\theta^{-1})^2=x$ and for $\theta_1=\pm\delta\theta,\pm\delta\theta^{-1}$, $x_1=-(\theta+\theta^{-1})^2$. If $-(\theta+\theta^{-1})^2= (\theta-\theta^{-1})^2$, i.e., $\theta^2+\theta^{-2}=0$, we have $x=(\theta-\theta^{-1})^2=1$ and $\Delta(1)=-1$. It means that the mapping $\Delta(x)$ is $2-\mathrm{to}-1$ on $\gf(3^n)\setminus\gf(3)$, and $\Delta(x)=-1$ has unique solution $x=1$ in $\gf(3^n)^\#$.

It is clear that $\Delta(0)=\Delta(-1)=1$ since $d$ is even. For $b=1$, if there exists $x\in\gf(3^n)^\#$ such that $\Delta(x)=1$, then we can find $\theta\in\gf(3^{2n})^*$ related to $x$, which satisfies $-(\theta^{3^k+1}+\theta^{-3^k-1})=1$. Then $\theta^{3^k+1}=1$, which implies that $\theta=\pm1,\pm\delta$, where $\delta\in\gf(3^{2n})^*$ we defined before. Hence $x=(\theta-\theta^{-1})^2=0$ or $-1$, a contradiction. Then we proved that there is no solution in $x\in\gf(3^n)\setminus\{0,-1\}$ such that $\Delta(x)=1$. By discussions as above, $F(x)$ is AP$c$N.

\end{proof}

In what follows, we obtain power functions with low $c$-differential uniformity via power functions with low usual differential uniformity. In \cite{HRS}, it was proved that if $p^n\equiv 2 (\mathrm{mod}~3)$, the power function $x^d$ is an APN function over $\gf(p^n)$, where $d=\frac{2p^n-1}{3}$. We study the $c$-differential uniformity in the following.
\begin{theorem}\label{over3}Let $F(x)$ be a power function over $\gf(p^n)$, where $d=\frac{2p^n-1}{3}$ and $p^n \equiv 2 (\mathrm{mod}~3)$. For $c\neq1$, $_c\Delta_F\leq 3$.
\end{theorem}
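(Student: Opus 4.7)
The plan rests on the observation that $3d = 2p^n - 1 \equiv 1 \pmod{p^n - 1}$, so $F$ is the compositional inverse of the cubing map on $\gf(p^n)$. Since $p^n \equiv 2 \pmod 3$ forces $\gcd(3, p^n - 1) = 1$, cubing is a bijection on $\gf(p^n)$ and $\gcd(d, p^n-1) = 1$. By Lemma \ref{power} it therefore suffices to show $\delta(b) \leq 3$ for every $b \in \gf(p^n)$, where as usual $\delta(b) = {_c\Delta_F}(1,b)$.

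To exploit this structure I substitute $u = x^d$, so that $x = u^3$ ranges bijectively over $\gf(p^n)$. Writing $v = (x+1)^d$ analogously yields $v^3 = x+1 = u^3+1$, and the equation $\Delta(x) = (x+1)^d - cx^d = b$ becomes $v = b + cu$. Cubing both sides eliminates the exponent $d$ and converts the problem into a polynomial equation of degree at most three in $u$:
\begin{equation*}
(c^3 - 1) u^3 + 3 c^2 b\, u^2 + 3 c b^2 u + (b^3 - 1) = 0.
\end{equation*}
Conversely, any root $u$ of this cubic yields a genuine solution $x = u^3$ of $\Delta(x) = b$: setting $v := b + cu$ forces $v^3 = x+1$, and since $\gcd(3,p^n-1) = 1$ guarantees unique cube roots, $v = (x+1)^d$ follows automatically.

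It remains to verify that the displayed equation really is a cubic. Because $\gcd(3, p^n-1) = 1$, the only cube root of unity in $\gf(p^n)$ is $1$, so the hypothesis $c \neq 1$ ensures $c^3 - 1 \neq 0$. Moreover, $p^n \equiv 2 \pmod 3$ rules out $p = 3$, so $3 \neq 0$ in $\gf(p^n)$ and the middle coefficients are well-behaved. Hence the cubic has at most three roots $u$, and the bijection $u \leftrightarrow x$ transports this bound to $\delta(b) \leq 3$. The one place that truly needs care is this leading-coefficient check $c^3 \neq 1$; once that is secured, the conclusion $_c\Delta_F \leq 3$ is just a degree count made possible by the cubing trick that turns the intractable exponent $d$ into the tractable exponent $3$.
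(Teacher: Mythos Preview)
Your proof is correct and follows essentially the same approach as the paper: your substitution $u=x^d$, $v=(x+1)^d$ is precisely the paper's $\beta$ and $\alpha$ (defined via $x=\beta^3$, $x+1=\alpha^3$), and both arguments reduce $\Delta(x)=b$ to the identical cubic $(b+c\beta)^3-\beta^3=1$. You are in fact slightly more careful than the paper in explicitly verifying that $c^3\neq 1$ so the leading coefficient is nonzero.
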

\begin{proof}We know that $\gcd(d,p^n-1)=1$ since $3d-2(p^n-1)=1$. For any $b\in\gf(p^n)$, consider the equation $\Delta(x)=(x+1)^d-cx^d=b$. If $x\in\gf(p^n)$ is a solution of $\Delta(x)=b$, let $x+1=\alpha^3$ and $x=\beta^3$. Such $\alpha,\beta\in\gf(p^n)$ exist uniquely because $\gcd(3,p^n-1)=1$. Then $\alpha$ and $\beta$ satisfy $\alpha^3-\beta^3=1$ and $b=\alpha^{3d}-c\beta^{3d}=\alpha-c\beta$. Note that $x$ is uniquely determined by $\beta$ and $\beta$ satisfies a cubic equation $(b+c\beta)^3-\beta^3=1$, which has at most $3$ solutions in $\gf(p^n)$. This implies $_c\Delta_F\leq 3$.
\end{proof}


Although the power function $x^{\frac{p^n+1}{2}}$ over $\gf(p^n)$ has high usual differential uniformity, it has low $c$-differential uniformity for all $\pm1\neq c\in\gf(p^n)$.
\begin{theorem}\label{pn+1over2}Let $F(x)=x^d$ be a power function over $\gf(p^n)$, where $d=\frac{p^n+1}{2}$ and $p$ is an odd prime. For $\pm1\neq c\in\gf(p^n)$, $_c\Delta_F \leq 4$. Moreover, if $p^n \equiv 1 (\mathrm{mod}~4)$ and $c$ satisfies $\chi(\frac{1-c}{1+c})=1$, $_c\Delta_F \leq 2$.
\end{theorem}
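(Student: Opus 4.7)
The plan is to exploit the identity $x^d = x \cdot x^{(p^n-1)/2} = x\,\chi(x)$, valid for every $x \in \gf(p^n)^*$ since $d = \frac{p^n+1}{2}$. Using the given partition $\gf(p^n)^\# = S_{1,1}\cup S_{-1,-1}\cup S_{1,-1}\cup S_{-1,1}$, the function $\Delta(x) = (x+1)^d - cx^d$ becomes the affine map $x \mapsto i(x+1) - cjx$ on $S_{i,j}$; explicitly
\[
\Delta|_{S_{1,1}}=(1-c)x+1,\quad \Delta|_{S_{-1,-1}}=(c-1)x-1,\quad \Delta|_{S_{1,-1}}=(1+c)x+1,\quad \Delta|_{S_{-1,1}}=-(1+c)x-1.
\]
Because $c \neq \pm 1$ each of these slopes is nonzero, so for each $b \in \gf(p^n)$ every $S_{i,j}$ contributes at most one solution of $\Delta(x)=b$. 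Adding the boundary points $x=0$ (contributing only if $b=1$) and $x=-1$ (only if $b=(-1)^{d+1}c=\Delta(-1)$) would naively give six candidates, but a short substitution shows that at these two sporadic values of $b$ two of the four affine candidates collapse to $x=0$ or $x=-1$ and hence lie outside the $S_{i,j}$ they were computed on; so $\delta(b)\leq 4$ in general and $\leq 3$ at the sporadic values. Since $\gcd(d,p^n-1)\in\{1,2\}$, Lemma \ref{power} yields $_c\Delta_F \leq 4$.

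For the sharpened bound, set $\eta := \chi(1-c)=\chi(1+c)$; the equality uses $\chi\!\left(\tfrac{1-c}{1+c}\right)=1$, and together with $\chi(-1)=1$ (from $p^n\equiv 1\pmod 4$) this makes $\chi(x)$ and $\chi(x+1)$ of each affine candidate expressible as $\eta$ times a single character of $b\pm 1$ or $b\pm c$. Requiring $x \in S_{i,j}$ produces the table: $S_{1,1}$ needs $\chi(b-1)=\chi(b-c)=\eta$; $S_{1,-1}$ needs $\chi(b-1)=-\eta$, $\chi(b+c)=\eta$; $S_{-1,-1}$ needs $\chi(b+1)=\chi(b+c)=-\eta$; $S_{-1,1}$ needs $\chi(b+1)=\eta$, $\chi(b-c)=-\eta$. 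The pairs $\{S_{1,1},S_{1,-1}\}$ and $\{S_{-1,-1},S_{-1,1}\}$ demand opposite signs of $\chi(b-1)$ and $\chi(b+1)$ respectively, so at most two of the four candidates can coexist and $\delta(b) \leq 2$ for generic $b$. For the sporadic values $b=1$ and $b=c$ the same table shows that the two would-be affine candidates each fail their membership constraint (using $\chi(1-c)=\chi(1+c)=\eta$), leaving only the single boundary solution. Since $\gcd(d,p^n-1)=1$ when $p^n\equiv 1\pmod 4$, Lemma \ref{power} gives $_c\Delta_F \leq 2$.

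The main obstacle I anticipate is the character bookkeeping in the second part: correctly computing $\chi(x)$ and $\chi(x+1)$ for each of the four candidates and then threading the hypothesis $\chi(1-c)=\chi(1+c)$ through the eight resulting membership constraints so that the incompatibility pairings are visible. Everything else is a routine affine reduction once the identity $x^d = x\,\chi(x)$ is in hand.
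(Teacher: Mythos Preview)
Your proposal is correct and follows essentially the same approach as the paper: both arguments rest on the identity $x^d=\chi(x)\,x$, the four-case split over $S_{i,j}$, the resulting affine equations (each contributing at most one candidate), and the observation that $\gcd(d,p^n-1)\le 2$. The handling of the boundary values $b=1$ and $b=\Delta(-1)$ is likewise the same in spirit, and your computation that two of the four affine candidates collapse to $0$ or $-1$ at those values is exactly what the paper checks.

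The only noticeable difference is in the second part: the paper pairs $\{S_{1,1},S_{-1,1}\}$ and $\{S_{-1,-1},S_{1,-1}\}$ by comparing $\chi\!\left(\frac{-b+c}{c-1}\right)$, whereas you pair $\{S_{1,1},S_{1,-1}\}$ and $\{S_{-1,-1},S_{-1,1}\}$ via $\chi(b-1)$ and $\chi(b+1)$. These two pairings are dual (one groups by $\chi(x)$, the other by $\chi(x+1)$) and both are valid; your introduction of $\eta=\chi(1-c)=\chi(1+c)$ makes the bookkeeping slightly more transparent. Your sporadic-case claim (``leaving only the single boundary solution'') is correct but compressed: for $b=1$ the $S_{-1,-1}$ constraint forces $\chi(1+c)=-\eta$ and the $S_{-1,1}$ constraint forces $\chi(1-c)=-\eta$, both contradicting $\eta$'s definition, and analogously for $b=c$; spelling this out would remove any doubt.
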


\begin{proof}For any $b\in\gf(p^n)$, if $x\in\gf(p^n)^\#$ is a solution of $\Delta(x)=(x+1)^d-cx^d=b$, then $x$ satisfies
\begin{equation}\label{eqnpn+1over21}
\chi(x+1)(x+1)-c\chi(x)x=b.
\end{equation}
We distinguish 4 cases.

Case I. $x\in S_{1,1}$, i.e., $\chi(x+1)=\chi(x)=1$. Then (\ref{eqnpn+1over21}) becomes $x+1-cx=b$, i.e. $x=\frac{-b+1}{c-1}$. That means (\ref{eqnpn+1over21}) has at most one solution in $S_{1,1}$.

Case II. $x\in S_{-1,-1}$, i.e., $\chi(x+1)=\chi(x)=-1$. Then (\ref{eqnpn+1over21}) becomes $-x-1+cx=b$, i.e. $x=\frac{b+1}{c-1}$. That means (\ref{eqnpn+1over21}) has at most one solution in $S_{-1,-1}$.

Case III. $x\in S_{1,-1}$, i.e., $\chi(x+1)=1,\chi(x)=-1$. Then (\ref{eqnpn+1over21}) becomes $x+1+cx=b$, i.e. $x=\frac{b-1}{c+1}$. That means (\ref{eqnpn+1over21}) has at most one solution in $S_{1,-1}$.

Case IV. $x\in S_{-1,1}$, i.e., $\chi(x+1)=-1,\chi(x)=1$. Then (\ref{eqnpn+1over21}) becomes $-x-1-cx=b$, i.e. $x=\frac{-b-1}{c+1}$. That means (\ref{eqnpn+1over21}) has at most one solution in $S_{1,-1}$.

We have $\delta(b)\leq4$ for $b\neq1,\pm c$ since $\Delta(0)=1$ and $\Delta(-1)=c$ or $-c$. Note that $\Delta(x)=1$ and $\Delta(x)=c$ have no solution in $S_{1,1}$, and $\Delta(x)=-c$ has no solutions in $S_{1,-1}$, then $\delta(0),\delta(\pm c)\leq 4$. Then $_c\Delta_F \leq 4$ follows by Lemma \ref{power} and $\gcd(\frac{p^n+1}{2},p^n-1)\leq2$.

Now we assume that $p^n \equiv 1 (\mathrm{mod}~4)$ and $\chi(\frac{1-c}{1+c})=1$, then $\chi(-1)=1$. For fixed $b\neq1,\pm c$, If (\ref{eqnpn+1over21}) has solutions in $S_{1,1}$, then $\chi(\frac{-b+c}{c-1})=1$. If (\ref{eqnpn+1over21}) has solutions in $S_{-1,1}$, then $\chi(\frac{-b+c}{c-1})=-1$. We conclude that (\ref{eqnpn+1over21}) cannot have solution in $S_{1,1}$ and $S_{-1,1}$ simultaneously. Similarly, we can prove that (\ref{eqnpn+1over21}) cannot have solution in $S_{-1,-1}$ and $S_{1,-1}$ simultaneously. That means $\delta(b)\leq 2$ for $b\neq1,\pm c$. It can be verified that $\Delta(x)=c$ (respectively, $\Delta(x)=-c$) has no solution in $S_{1,1}$ and $S_{-1,1}$ simultaneously (respectively, in $S_{-1,-1}$ and $S_{1,-1}$), then $\delta(c),\delta(-c)\leq2$. We can prove that (\ref{eqnpn+1over21}) cannot have solution in both $S_{1,1}$ and $S_{1,-1}$. Then $\delta(1)\leq2$ since $\chi(-1)=1$, $\Delta(x)=1$ has no solution in $S_{-1,-1}$ and $S_{1,-1}$ simultaneously. Hence $_c\Delta_F \leq 2$.
\end{proof}

When we study the $c$-differential properties, $c=-1$ is a very special case. It can be seen that when $c=-1$, the $c$-differential equation becomes $\Delta(x)=(x+a)^d+x^d$. Sometimes the power function has low $c$-differential uniformity when $c=-1$.

In \cite{HS}, the authors studied the differential uniformity of power function $x^{\frac{p^n+3}{2}}$ and $x^{\frac{p^n-3}{2}}$ over $\gf(p^n)$. We consider their $c$-differential uniformity when $c=-1$. For $p=3$, power function $x^\frac{3^n+3}{2}$ is equivalent to $x^\frac{3^{n-1}+1}{2}$, which was studied in Theorems \ref{pk+1over2pcn} and \ref{pk+1over2apcn}. In the following, we discuss the $c$-differential uniformity of $x^{\frac{p^n+3}{2}}$ for $c=-1$ and $p>3$.
\begin{theorem}\label{pn+3over2}Let $F(x)=x^d$ be a power function over $\gf(p^n)$, where $p>3$ is an odd prime and $d=\frac{p^n+3}{2}$. For $c=-1$, $_c\Delta_F \leq 4$ if $p^n\equiv 1 (\mathrm{mod}~4)$ and $_c\Delta_F \leq 3$ if $p^n\equiv 3 (\mathrm{mod}~4)$.
\end{theorem}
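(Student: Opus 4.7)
Since $c=-1$ and $d=(p^n+3)/2$, for every $x\in\gf(p^n)^*$ we have $x^d=\chi(x)\,x^2$, so for $x\in\gf(p^n)^\#$,
\[
\Delta(x)=\chi(x+1)(x+1)^2+\chi(x)\,x^2.
\]
My plan is to partition $\gf(p^n)^\#$ into $S_{1,1},S_{-1,-1},S_{1,-1},S_{-1,1}$; on these four pieces $\Delta(x)=b$ becomes, respectively, the quadratics $(x+1)^2+x^2=\pm b$ (Cases I, II) and the linear equations $\pm(2x+1)=b$ (Cases III, IV). To the solutions from these four cases one must add the two trivial contributions $\Delta(0)=1$ and $\Delta(-1)=(-1)^d$.

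The workhorse is the involution $\iota(x)=-1-x$, which swaps the two roots of each quadratic (their sum is $-1$) and exchanges the Case III and Case IV linear solutions $(b-1)/2$ and $-(b+1)/2$. From $\chi(-x)=\chi(-1)\chi(x)$ and $\chi(-1-x)=\chi(-1)\chi(x+1)$ I would compute that $\iota$ fixes $S_{1,1}$ and $S_{-1,-1}$ and swaps $S_{1,-1}\leftrightarrow S_{-1,1}$ when $\chi(-1)=1$, while it swaps $S_{1,1}\leftrightarrow S_{-1,-1}$ and fixes the other two sets when $\chi(-1)=-1$. Therefore: if $p^n\equiv 1\pmod 4$, each of Cases I, II contributes $0$ or $2$ and the membership conditions for Cases III and IV coincide, so these two together contribute $0$ or $2$; if $p^n\equiv 3\pmod 4$, each of Cases I, II contributes at most $1$ (only one of the two paired roots lies in the requested diagonal class), and the conditions for Cases III and IV are negations of one another, so at most one of them contributes.

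For $p^n\equiv 3\pmod 4$ the bound follows immediately: the four cases contribute at most $1+1+1=3$ to $\delta(b)$. I would then verify that the exceptional values $b=\pm 1$ (receiving extra mass from $x=0,-1$) remain within the bound, by noting that for $b\in\{\pm 1\}$ the Case I or Case II quadratic degenerates to $x(x+1)=0$ and the linear-case candidates land at the forbidden points $x=0,-1$; in each sub-case only one further contribution survives, so $\delta(\pm 1)\le 2$. Since $p^n-1=2d-4$ gives $\gcd(d,p^n-1)=\gcd(d,4)=1$ ($d$ being odd here), Lemma~\ref{power} yields $_c\Delta_F\le 3$.

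For $p^n\equiv 1\pmod 4$ the naive bound is $2+2+2=6$, so the crux is a mutual-exclusion argument between Case I and Cases III--IV. Using the product of roots $\alpha_1\alpha_2=(1-b)/2$ of Case I's quadratic together with $\chi(\alpha_2)=\chi(\alpha_1+1)$ (a consequence of $\chi(-1)=1$), one gets
\[
\chi(\alpha_1)\chi(\alpha_1+1)=\chi((1-b)/2)=\chi(b-1)\chi(2).
\]
For Case I to contribute, both roots must lie in $S_{1,1}$, i.e.\ $\chi(\alpha_1)=\chi(\alpha_1+1)=1$, forcing $\chi(b-1)=\chi(2)$; the Case III condition $\chi((b-1)/2)=-1$ rewrites as $\chi(b-1)=-\chi(2)$, a direct contradiction. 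Hence at most two of the three blocks (Case I, Case II, Cases III--IV) contribute simultaneously, totalling at most $2+2=4$. The boundary $b=1$ (absorbing the extra $\Delta(0)=\Delta(-1)=1$ since $d$ is even) is handled analogously: Case I's quadratic degenerates, the linear cases land on forbidden points, so only Case II can add to $\delta(1)$, giving $\delta(1)\le 2+2=4$. Finally $\gcd(d,p^n-1)=\gcd(d,4)\in\{2,4\}$, and Lemma~\ref{power} delivers $_c\Delta_F\le 4$. The main obstacle is isolating the clean arithmetic form $\chi(b-1)=\chi(2)$ of Case I's contribution condition via careful bookkeeping of the characters of $\alpha_2=-1-\alpha_1$ under $\chi(-1)=1$, since without it the naive count stops at $6$.
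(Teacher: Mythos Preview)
Your proposal is correct and follows essentially the same strategy as the paper: the four-way split by $(\chi(x+1),\chi(x))$, the pairing $x\leftrightarrow -1-x$ to control the quadratics, and the character identity $\chi((b-1)/2)=1$ forced by a Case~I solution to exclude Case~III. The only cosmetic differences are your explicit use of the involution $\iota$ as an organizing device, and in the $p^n\equiv 3\pmod 4$ case you invoke a III--IV exclusion (via $\iota$ fixing $S_{1,-1}$ and $S_{-1,1}$) where the paper instead reuses the I--III exclusion; both routes yield the same bound $1+1+1=3$. Your derivation of $\chi(b-1)=\chi(2)$ via $\chi(\alpha_2)=\chi(\alpha_1+1)$ is more elaborate than necessary---the paper simply observes that $x(x+1)=(b-1)/2$ with $x,x+1$ both squares gives $\chi((b-1)/2)=1$ directly---but it is not wrong.
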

\begin{proof}It is easy to see that $\gcd(d,p^n-1)=1$ when $p^n\equiv 3 (\mathrm{mod}~4)$ and $\gcd(d,p^n-1)=2$ or $4$ when $p^n\equiv 1 (\mathrm{mod}~4)$. For $b\in\gf(p^n)$, if $x\in\gf(p^n)^\#$ is a solution of $\Delta(x)=b$, then $x$ satisfies
\begin{equation}\label{eqnpn+3over21}
\chi(x+1)(x+1)^2+\chi(x)x^2=b.
\end{equation}
We distinguish the following four cases.

Case I. $x\in S_{1,1}$, i.e., $\chi(x+1)=\chi(x)=1$. Then (\ref{eqnpn+3over21}) becomes $x(x+1)=\frac{b-1}{2}$, which has at most two solutions.

Case II. $x\in S_{-1,-1}$, i.e., $\chi(x+1)=\chi(x)=-1$. Then (\ref{eqnpn+3over21}) becomes $x(x+1)=\frac{-b-1}{2}$, which has at most two solutions.

Case III. $x\in S_{1,-1}$, i.e., $\chi(x+1)=1, \chi(x)=-1$. Then we obtain $x=\frac{b-1}{2}$ from (\ref{eqnpn+3over21}).

Case IV. $x\in S_{-1,1}$, i.e., $\chi(x+1)=-1, \chi(x)=1$. Then we obtain $x=\frac{-b-1}{2}$ from (\ref{eqnpn+3over21}).

First we assert that $\Delta(x)=b$ cannot have solutions in $S_{1,1}$ and $S_{1,-1}$ simultaneously for fixed $b\in\gf(p^n)$. Suppose on the contrary, then $\chi(\frac{b-1}{2})=-1$ since it is a solution in $S_{1,-1}$, and $\chi(\frac{b-1}{2})=1$ since (\ref{eqnpn+3over21}) has solutions in $S_{1,1}$, which is a contradiction.

If $p^n\equiv 1 (\mathrm{mod}~4)$, then $\chi(-1)=1$. Then $\Delta(x)=b$ cannot have solutions in $S_{1,1}$ and $S_{-1,1}$ simultaneously. Otherwise, we obtain $\chi(\frac{-b+1}{2})=-1$ from Case IV and $\chi(\frac{b-1}{2})=1$ from Case I, which is a contradiction.  That means for any $b$, the solutions of (\ref{eqnpn+3over21}) in $\gf(p^n)^\#$ is at most $4$. It is easy to see that $\Delta(0)=\Delta(-1)=1$ since $d$ is even. For $b=1$, it can be verified that $\Delta(x)=1$ has no solution in $S_{1,1},S_{1,-1}$ and $S_{-1,1}$, then $\delta(1)\leq4$. This with $\gcd(d,p^n-1)=2$ or $4$ implies that $_c\Delta_F\leq4$.

If $p^n\equiv 3 (\mathrm{mod}~4)$, then $\chi(-1)=-1$. If $x_1\in S_{1,1}$ is a solution of $x(x+1)=\frac{b-1}{2}$, then the other solution is $-x_1-1$. Note that $x_1$ and $-x_1-1$ cannot in $S_{1,1}$ simultaneously, so (\ref{eqnpn+3over21}) has at most $1$ solution in $S_{1,1}$. Similarly, (\ref{eqnpn+3over21}) has at most $1$ solution in $S_{-1,-1}$. Then $\Delta(x)=b$ has at most $3$ solutions in $\gf(p^n)^\#$ since $\Delta(x)=b$ cannot have solutions in both $S_{1,1}$ and $S_{1,-1}$. It is clear that $\Delta(0)=1$ and $\Delta(-1)=-1$ since $d$ is odd. For $b=1$ and $b=-1$, it can be verified that $\Delta(x)=1$ and $\Delta(x)=-1$ have no solution in $\gf(p^n)^\#$. We conclude that $_c\Delta_F\leq3$.
\end{proof}

We also discuss the $c$-differential uniformity of $x^{\frac{p^n-3}{2}}$  for $c=-1$ as follows.

\begin{theorem}\label{pn-3over2}Let $F(x)=x^d$ be a power function over $\gf(p^n)$, where $p$ is an odd prime and $d=\frac{p^n-3}{2}$. For $c=-1$, $_c\Delta_F \leq 4$.
\end{theorem}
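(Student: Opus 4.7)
The plan is to follow the framework of Theorems~\ref{pn+1over2} and~\ref{pn+3over2}. Since $d = (p^n-1)/2 - 1$, for $x \in \gf(p^n)^*$ one has $x^d = \chi(x)/x$, so for $x \in \gf(p^n)^\#$ the equation $\Delta(x) = b$ becomes, after multiplying through by $x(x+1)$,
\[\chi(x+1)\,x + \chi(x)\,(x+1) = bx(x+1).\]
Splitting by which $S_{i,j}$ contains $x$ produces four quadratics: $bx^2+(b-2)x-1=0$ on $S_{1,1}$, $bx^2+(b+2)x+1=0$ on $S_{-1,-1}$, $bx^2+bx+1=0$ on $S_{1,-1}$, and $bx^2+bx-1=0$ on $S_{-1,1}$. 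Each has at most $2$ roots, so the naive bound on $\delta(b)$ in $\gf(p^n)^\#$ is $8$; the real task is to cut this to $4$.

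The reduction is carried out using the involution $\sigma(x) = -x-1$. A direct substitution shows $\sigma$ carries the $S_{1,1}$-quadratic to the $S_{-1,-1}$-quadratic and preserves each of the $S_{1,-1}$- and $S_{-1,1}$-quadratics (both root sums equal $-1$), while its action on the sets $S_{i,j}$ is controlled by $\chi(-1)$. When $\chi(-1) = 1$, $\sigma$ preserves each $S_{i,i}$ and swaps $S_{1,-1} \leftrightarrow S_{-1,1}$; a pairwise analysis shows that for each orbit $\{x,\sigma(x)\}$ of candidate roots, at most one member can lie in the intended $S_{i,j}$ (because the other member satisfies the wrong quadratic among the paired two), yielding $n_{1,1} + n_{-1,-1} \leq 2$ and $n_{1,-1} + n_{-1,1} \leq 2$. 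When $\chi(-1) = -1$, $\sigma$ swaps $S_{1,1} \leftrightarrow S_{-1,-1}$ and preserves $S_{1,-1}$ and $S_{-1,1}$; here the character identities $\chi(x_1 x_2) = \chi(-1/b)$ and $\chi((x_1+1)(x_2+1)) = \chi(1/b)$ for the roots of a given quadratic produce incompatible conditions on $\chi(b)$, forcing at most one root of each of the $S_{1,1}$- and $S_{-1,-1}$-quadratics to satisfy its $S_{i,j}$-condition, and making the required signs of $\chi(b)$ for non-emptiness of the $S_{1,-1}$- and $S_{-1,1}$-cases mutually exclusive. Either way, $\sum n_{i,j} \leq 4$ on $\gf(p^n)^\#$.

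To conclude, I would add the boundary contributions from $x = 0$ (giving $\Delta(0) = 1$) and $x = -1$ (giving $\Delta(-1) = (-1)^d$), and verify that at $b = 1$ and (when $d$ is odd) at $b = -1$ the cases $S_{1,-1}$ and $S_{-1,1}$ drop to zero valid solutions. For instance, at $b=1$ the roots of $x^2+x+1=0$ in $\gf(p^n)$ are nontrivial cube roots of unity which, since $3 \mid p^n-1$ forces $6 \mid p^n-1$ for odd $p$, are automatically squares, so $\chi(x) = 1$ excludes them from $S_{1,-1}$; and the $S_{-1,1}$-quadratic forces $x(x+1)=1$, so $\chi(x(x+1)) = 1$ contradicts the required $\chi(x)\chi(x+1) = -1$. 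Combined with $\gcd(d, p^n-1) \leq 2$, obtained from $\gcd(2d, p^n-1) = \gcd(p^n-3, p^n-1) = 2$ in the $a=0$ branch of Lemma~\ref{power}, this gives $_c\Delta_F \leq 4$. The principal obstacle is the parity-dependent bookkeeping in the involution arguments, together with the careful handling at $b = \pm 1$ needed to ensure that the contributions from $x \in \{0,-1\}$ do not push $\delta(b)$ above $4$.
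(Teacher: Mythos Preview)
Your proposal is correct and follows essentially the same route as the paper: both reduce $\Delta(x)=b$ on $\gf(p^n)^\#$ to the same four quadratics via $x^d=\chi(x)x^{-1}$, then use the involution $\sigma(x)=-x-1$ together with the product-of-roots character identities, split according to $\chi(-1)$, and finish by checking the boundary values $b=\pm 1$ and the bound $\gcd(d,p^n-1)\le 2$. The paper organizes the case analysis by the additional split on $\chi(b)$ rather than packaging everything through $\sigma$-orbits, and it treats $b=0$ and the $b=\pm 1$ boundary a bit more explicitly (your cube-root argument at $b=1$ is correct but the parallel $b=-1$ case when $d$ is odd needs the extra observation $\chi(x+1)=\chi(-x^2)=\chi(-1)$), but the substance is the same.
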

\begin{proof}It is easy to see that $\gcd(d,p^n-1)=1$ when $p^n\equiv 1 (\mathrm{mod}~4)$ and $\gcd(d,p^n-1)=2$ when $p^n\equiv 3 (\mathrm{mod}~4)$. For $b\in\gf(p^n)$, we consider the equation
\begin{equation}\label{eqnpn-3over21}
\Delta(x)=(x+1)^d+x^d=b.
\end{equation}
If $b=0$, (\ref{eqnpn-3over21}) has unique solution $x=-\frac{1}{2}$ when $p^n\equiv 1 (\mathrm{mod}~4)$ and has no solution when $p^n\equiv 3 (\mathrm{mod}~4)$. Now we assume $b\neq0$. If $x\in\gf(p^n)^\#$ is a solution of $\Delta(x)=b$ for fixed $b\in\gf(p^n)^*$, then $x$ satisfies
\begin{equation}\label{eqnpn-3over22}
\chi(x+1)(x+1)^{-1}+\chi(x)x^{-1}=b.
\end{equation}
We distinguish the following four cases.

Case I. $x\in S_{1,1}$, i.e., $\chi(x+1)=\chi(x)=1$. Then (\ref{eqnpn-3over22}) becomes $x^2+(1-\frac{2}{b})x-\frac{1}{b}=0$, which has at most two solutions.

Case II. $x\in S_{-1,-1}$, i.e., $\chi(x+1)=\chi(x)=-1$. Then (\ref{eqnpn-3over22}) becomes $x^2+(1+\frac{2}{b})x+\frac{1}{b}=0$, which has at most two solutions.

Case III. $x\in S_{1,-1}$, i.e., $\chi(x+1)=1, \chi(x)=-1$. Then (\ref{eqnpn-3over22}) becomes $x(x+1)=-\frac{1}{b}$,  which has at most two solutions.

Case IV. $x\in S_{-1,1}$, i.e., $\chi(x+1)=-1, \chi(x)=1$. Then (\ref{eqnpn-3over22}) becomes $x(x+1)=\frac{1}{b}$,  which has at most two solutions.

If $p^n\equiv 1 (\mathrm{mod}~4)$, then $\chi(-1)=1$. If $b$ is a square element, then $\chi(-\frac{1}{b})=\chi(\frac{1}{b})=1$. In both Cases III and IV, if $x$ is a solution, then $\chi(x(x+1))=-1$. So (\ref{eqnpn-3over21}) has no solution in $S_{1,-1}$ and $S_{-1,1}$, hence (\ref{eqnpn-3over21}) has at most $4$ solutions in $\gf(p^n)^\#$. If $b$ is a nonsquare element, consider the solutions in each case. In Case I, the product of two solutions of equation $x^2+(1-\frac{2}{b})x-\frac{1}{b}=0$ is $-\frac{1}{b}$, which is a nonsquare element, this means (\ref{eqnpn-3over21}) has at most one solution in $S_{1,1}$. Similarly, we can prove that (\ref{eqnpn-3over21}) has at most 1 solution in $S_{-1,-1}$. Now we consider Case III.
Let $x_3$ and $-x_3-1$ be the two solutions of the quadratic equation $x(x+1)=-\frac{1}{b}$. It is easy to check that $x_3\in S_{1,-1}$ if and only if  $-x_3-1\in S_{-1,1}$. This implies that (\ref{eqnpn-3over21}) has at most $1$ solution in $S_{1,-1}$. Similarly, we can prove that (\ref{eqnpn-3over21}) has at most $1$ solution in $S_{-1,1}$. Now we proved the (\ref{eqnpn-3over21}) has at most 4 solutions in $\gf(p^n)^\#$ for $b\in\gf(p^n)^*$.

It is easy to see that $\Delta(0)=1$ and $\Delta(-1)=-1$. When $b=1$ is a square element, it was proved that $\Delta(x)=1$ has no solutions in Cases III and IV. In Case I, the quadratic equation is $x^2-x-1=0$, $x=\frac{1\pm\sqrt{5}}{2}$. If they are solutions, $\chi(\frac{1\pm\sqrt{5}}{2})=\chi(\frac{3\pm\sqrt{5}}{2})=1$. In Case II, the quadratic equation is $x^2+3x+1=0$, $x=\frac{-3\pm\sqrt{5}}{2}$. If they are solutions, $\chi(\frac{-3\pm\sqrt{5}}{2})=\chi(\frac{-1\pm\sqrt{5}}{2})=-1$, i.e., $\chi(\frac{3\pm\sqrt{5}}{2})=\chi(\frac{1\pm\sqrt{5}}{2})=-1$ since $\chi(-1)=1$. That means (\ref{eqnpn-3over21}) has at most $2$ solutions in $S_{1,1}$ and $S_{-1,-1}$. Then $\delta(1)\leq3$. We can prove $\delta(-1)\leq3$ in a similar way. By discussions as above, we conclude that $_c\Delta_F \leq 4$ if $p^n\equiv 1 (\mathrm{mod}~4)$.

If $p^n\equiv 3 (\mathrm{mod}~4)$, then $\chi(-1)=-1$. Now $d$ is an even number, if $x$ is a solution of $\Delta(x)=(x+1)^d+x^d=b$ for some $b$, so is $-x-1$. This means that the solution number of $\Delta(x)=b$ in $S_{1,1}$ and $S_{-1,-1}$ are the same. If $b$ is a square element, (\ref{eqnpn-3over21}) has at most $1$ solution in $S_{1,1}$ since $\chi(-\frac{1}{b})=-1$, then (\ref{eqnpn-3over21}) has at most $1$ solution in $S_{-1,-1}$. There is no solution in $S_{-1,1}$ since the solution satisfies $\chi(x(x+1))=-1$. Then (\ref{eqnpn-3over21}) has at most $4$ solutions in $\gf(p^n)^\#$ when $b$ is a square element. If $b$ is a nonsquare element, we can similarly prove that (\ref{eqnpn-3over21}) has at most $1$ solution in $S_{1,1}$, at most $1$ solution in $S_{-1,-1}$ and no solution in $S_{1,-1}$. Then we proved that (\ref{eqnpn-3over21}) has at most 4 solutions in $\gf(p^n)^\#$ for $b\in\gf(p^n)^*$.

It is easy to see that $\Delta(0)=\Delta(-1)=1$. Now we focus on $b=1$, which is a square element. It was proved that $\Delta(x)=1$ has at most $1$ solution in $S_{1,1}$, at most $1$ solution in $S_{-1,-1}$, and no solution in Case $S_{-1,1}$. If $\widetilde{x}\in S_{1,-1}$ is a solution of  (\ref{eqnpn-3over21}), then $\chi(\widetilde{x}+1)=1$, $\chi(\widetilde{x})=-1$ and $\widetilde{x}(\widetilde{x}+1)=-1$. Then $\widetilde{x}$ satisfies $(\widetilde{x}+1)^2=\widetilde{x}$, the left-hand side is a square while the right-hand side is a nonsquare, which is a contradiction. Then $\Delta(x)=1$ has no solution in $S_{1,-1}$. That means $\delta(1)\leq4$. By discussions as above, we conclude that $_c\Delta_F \leq 4$ if $p^n\equiv 3 (\mathrm{mod}~4)$, which completes the proof.

\end{proof}

\section{Concluding remarks}
In 2020, Ellingsen \textit{et al.} \cite{EFRST} have defined a new (output) multiplicative differential, and the corresponding $c$-differential uniformity.  Using this new concept, even for characteristic $2$, there are perfect c-nonlinear (PcN) functions. The  modification on the classical notion of differential uniformity was motivated by the use of modular multiplications in some symmetric cryptographic schemes such as the well-known IDEA cipher and other recent symmetric primitives.

In the current paper, we pushed further the successful attempt initiated in \cite{EFRST} by studying the $c$-differential uniformity of power functions over finite fields (which represent an important class of functions due to their low implementation cost in a hardware environment). We derived several classes of power functions with low $c$-differential uniformity. Some of them are PcN or almost PcN. We have also provided a proof of a recent conjecture proposed by Bartoli and Timpanella related to an exceptional quasi-planar power function confirming its validity.
It would be possible and interesting to find more functions over finite fields with low $c$-differential uniformity.


\end{document}